\newif\ifspringer
\numberwithin{equation}{section}
\numberwithin{table}{section}
\newcommand{\loosness}{\ensuremath{\lambda}}
\newcommand{\overl}{\ensuremath{h}}
\newcommand{\slack}{\ensuremath{\sigma}}
\newcommand{\pt}{po\-ly\-nom\-i\-al-time}
\newcommand{\ppt}{pseu\-do-po\-ly\-nom\-i\-al-time}
\newcommand{\pp}{pseu\-do-po\-ly\-nom\-i\-al}
\newcommand{\fp}{fi\-xed-pa\-ra\-me\-ter}
\newcommand{\STW}[1]{\textsc{\ensuremath{#1}-Loose In\-ter\-val-Con\-strained Sched\-ul\-ing}} \newcommand{\ICSS}[1]{\textsc{\ensuremath{#1}-Slack In\-ter\-val-Con\-strained Sched\-ul\-ing}}
\newcommand{\ICS}{\textsc{In\-ter\-val-Con\-strained
 Sched\-ul\-ing}}
\newcommand{\amax}{a_{\text{max}}}
\newcommand{\binp}{\textsc{Bin Packing}}
\newcommand{\tpart}{\textsc{3-Partition}}
\newcommand{\npmptly}{non-pre\-emp\-tive\-ly}
\let\cl@chapter\relax \makeatother
\spnewtheorem{construction}[theorem]{Construction}{\bfseries}{\normalfont}
\spnewtheorem{algorithm}[theorem]{Algorithm}{\bfseries}{\normalfont}
\newtheorem{theorem}{Theorem}[section]
\theoremstyle{definition}
\newtheorem{lemma}[theorem]{Lemma}
\newtheorem{definition}[theorem]{Definition}
\newtheorem{algorithm}[theorem]{Algorithm}
\newtheorem{remark}[theorem]{Remark}
\newtheorem{construction}[theorem]{Construction}
\newtheorem{proposition}[theorem]{Proposition}
\crefname{construction}{Construction}{Constructions}
\crefname{algorithm}{Algorithm}{Algorithms}
\crefname{section}{Section}{Sections}
\crefname{figure}{Figure}{Figures}
\crefname{table}{Table}{Tables}
\newcommand{\deadl}{d}
\newcommand{\proct}{p}
\newcommand{\relt}{t}
\newcommand{\reltmax}{\relt_{\text{max}}}
\DeclareMathOperator\poly{poly}
\newcommand{\prob}[5]{%
  \begingroup
  \par\medskip
  \noindent \textsc{#1}\nopagebreak[4]
  \par\noindent\hangindent=0.5\parindent\textit{#2}  #3
  \par\noindent\hangindent=0.5\parindent\textit{#4}  #5
  \par  \medskip
  \endgroup
}
\newcommand{\decprob}[3]{\prob{#1}{Input:}{#2}{Question:}{#3}}
\title{A parameterized complexity view on \npmptly{} scheduling in\-ter\-val-con\-strained jobs: few machines, small looseness, and small slack}
\titlerunning{A parameterized complexity view on \npmptly{} scheduling in\-ter\-val-con\-strained jobs}
\author{René van Bevern\thanks{René van Bevern is supported by grant 16-31-60007 mol\textunderscore{}a\textunderscore{}dk of the Russian Foundation for Basic Research (RFBR).} \and Rolf Niedermeier \and Ondřej Suchý\thanks{Ondřej Suchý is supported by grant 14-13017P of the Czech Science Foundation.}}
\date{Submitted: August 7, 2015\\Accepted: March 23, 2016}
\journalname{Journal of Scheduling}
\institute{René van Bevern \at Novosibirsk State University, Novosibirsk,  Russian Federation, \email{rvb@nsu.ru}
\and  Rolf Niedermeier \at Institut f\"ur Softwaretechnik und Theoretische Informatik,
  TU Berlin, Germany, \email{rolf.niedermeier@tu-berlin.de}
\and Ondřej Suchý \at Faculty of Information Technology, Czech Technical University in Prague, Prague, Czech Republic, \email{ondrej.suchy@fit.cvut.cz}}
\author[1]{René van Bevern}
\author[2]{Rolf Niedermeier}
\author[3]{Ondřej Suchý}
\affil[1]{Novosibirsk State University, Novosibirsk, Russian Federation, \texttt{rvb@nsu.ru}}
\affil[2]{Institut f\"ur Softwaretechnik und Theoretische Informatik,
  TU Berlin, Germany, \texttt{rolf.niedermeier@tu-berlin.de}}
\affil[3]{Faculty of Information Technology, Czech Technical University in Prague, Prague, Czech Republic, \texttt{ondrej.suchy@fit.cvut.cz}}
\begin{document}
\maketitle

\begin{abstract}
  \looseness=-1\noindent We study the problem of \npmptly{} sched\-uling $n$~jobs, each job~$j$ with a release time~$\relt_j$, a deadline~$\deadl_j$, and a processing time~$\proct_j$, on $m$~parallel identical machines.  \citet{CEHBW04} considered the two constraints $|\deadl_j-\relt_j|\leq \loosness{}\proct_j$ and $|\deadl_j-\relt_j|\leq\proct_j +\slack$ and showed the problem to be NP-hard for any~$\loosness>1$ and for any~$\slack\geq 2$.  We complement their results by parameterized complexity studies: we show that, for any~$\loosness>1$, the problem remains weakly NP-hard even for~$m=2$ and strongly W[1]-hard parameterized by~$m$.  We present a \ppt{} algorithm for constant~$m$ and~$\loosness$ and a
\fp{} tractability result for the parameter~$m$ combined with~$\slack$. %
\ifspringer
\keywords{%
\else
\paragraph{Keywords}\newcommand{\and}{~$\cdot$ }
\fi
release times and deadlines\and machine minimization\and sequencing within intervals\and shiftable intervals\and \fp{} tractability\and NP-hard problem
\ifspringer
}
\fi
\end{abstract}

\section{Introduction}

Non-preemptively scheduling jobs with release times and deadlines on a minimum number of machines is a well-stud\-ied problem both in offline and online variants \citep{CMS16,CGKN04,CEHBW04,MN07,Sah13}.  In its decision version, the problem is formally defined as follows:
\decprob{\ICS{}}
{A set~$J:=\{1,\dots,n\}$ of jobs, a number~$m\in\mathbb N$ of machines, each job~$j$ with a \emph{release time}~$\relt_j\in\mathbb N$, a \emph{deadline}~$\deadl_j\in\mathbb N$, and a \emph{processing time~$\proct_j\in\mathbb N$}.}
{Is there a schedule that schedules all jobs onto $m$~parallel identical machines such that
  \begin{compactenum}
  \item each job~$j$ is executed \npmptly{} for $\proct_j$~time units,
  \item each machine executes at most one job at a time, and
  \item each job~$j$ starts no earlier than~$\relt_j$ and is finished by~$\deadl_j$.
  \end{compactenum}
}
\noindent For a job~$j\in J$, we call the half-open interval~$[\relt_j,\deadl_j)$ its \emph{time window}.  A job may only be executed during %
its time window.  The \emph{length} of the time window is~$\deadl_j-\relt_j$.

We study \ICS{} with two additional constraints introduced by \citet{CEHBW04}.  These constraints relate the time window lengths of jobs to their processing times:

\paragraph{Looseness} If all jobs~$j\in J$ satisfy $|\deadl_j-\relt_j|\leq\loosness p_j$ for some number~$\loosness\in\mathbb R$, then the instance has \emph{looseness~\loosness{}}.   By \STW{\loosness{}} we denote the problem  restricted to instances of looseness~$\loosness$.

\paragraph{Slack} If all jobs~$j\in J$ satisfy $|\deadl_j-\relt_j|\leq \proct_j + \slack$ for some number~$\slack\in\mathbb R$, then the instance has \emph{slack~\slack{}}.  By \ICSS{\slack} we denote the problem restricted to instances of slack~$\slack$.

\bigskip\noindent Both constraints on \ICS{} are very natural: clients may accept some small deviation of at most~$\slack{}$ from the desired start times of their jobs.  Moreover, it is conceivable that clients allow for a larger deviation for jobs that take long to process anyway, leading to the case of bounded looseness~$\loosness$.  

\citet{CEHBW04} showed that, even for constant~$\loosness>1$ and constant~$\slack\geq 2$, the problems \STW{\loosness} and \ICSS{\slack} are strongly NP-hard. 

Instead of giving up on finding optimal solutions and resorting to approximation algorithms \citep{CGKN04,CEHBW04}, we conduct a more fine-grained complexity analysis of these problems employing the framework of \emph{parameterized complexity theory} \citep{CFK+15,DF13,FG06,Nie06}, which so far received comparatively little attention in the field of scheduling with seemingly only a handful of publications \citep{BJH+15,BMNW15,BF95,CEHBW04,FM03,HK06,HKS+15,MW15}.  In particular, we investigate the effect of the parameter~$m$ of available machines on the parameterized complexity of interval-constrained scheduling without preemption.

\begin{table*}
  \centering\ifspringer\else\small\fi
  \caption{Overview of results on \ICS{} for various parameter combinations.  The parameterized complexity with respect to the combined parameter~$\loosness + \slack$ remains open.}

  \begin{tabular}{rp{3cm}p{3cm}p{8cm}} \toprule%
    Combined & \multicolumn{3}{c}{Parameter}\\ \cmidrule(r){2-4}%
    with & looseness~$\loosness$ & slack~$\slack$ & number~$m$ of machines\\
    \cmidrule(r){1-1} \cmidrule(r){2-4} %
    $\loosness$ & NP-hard for any~$\loosness>1$ \citep{CEHBW04}
                & \multicolumn{1}{p{3cm}}{\vspace{0.35\baselineskip}\hspace{1.25cm}?}
                & W[1]-hard for parameter~$m$ for any~$\loosness>1$ (\cref{w1hard}),\newline
weakly NP-hard for $m=2$ and any $\lambda>1$ (\cref{w1hard}),\newline
\pp{} time for fixed~$m$ and~$\loosness$ (\cref{anyalphafpt})
                   \\ \\
    $\slack$ & 
             & NP-hard for any~$\slack\geq 2$ \citep{CEHBW04}
               & fixed-parameter tractable for parameter \(\sigma+m\) (\cref{fpt}) \\ \\
    $m$ & & & NP-hard for~$m=1$ \citep{GJ79} \\
    \bottomrule
  \end{tabular}
  \label{tab:results}
\end{table*}

\paragraph{Related work}
\ICS{} is a classical scheduling problem and strongly NP-hard already on one machine \citep[problem~SS1]{GJ79}.
Besides the task of scheduling all jobs on a minimum number of machines, the literature contains a wide body of work concerning the maximization of the number of scheduled jobs on a bounded number of machines \citep{KLPS07}.  

For the objective of minimizing the number of machines, \citet{CGKN04} developed a factor-$O(\sqrt{\log n/\log\log n})$-ap\-prox\-i\-ma\-tion algorithm. \citet{MN07} formalized machine minimization and other objectives in terms of optimization problems in shiftable interval graphs.  Online algorithms for %
minimizing the number of machines
have been studied as well and we refer to recent work by \citet{CMS16} for an overview.

Our work refines the following results of \citet{CEHBW04}, who considered \ICS{} with bounds on the looseness and the slack.  They showed that \ICS{} is strongly NP-hard for any looseness~$\loosness>1$ and any slack~$\slack\geq 2$.  Besides giving approximation algorithms for various special cases, they give a \pt{} algorithm for~$\slack=1$ and a \fp{} tractability result for the combined parameter~$\slack$ and~$\overl{}$, where $\overl{}$~is the maximum number of time windows overlapping in any point in time.

\paragraph{Our contributions}  We analyze the parameterized complexity of \ICS{} with respect to three parameters: the number~$m$ of machines, the looseness~$\loosness{}$, and the slack~$\slack{}$.  More specifically, we refine known results of \citet{CEHBW04} using tools of parameterized complexity analysis.  An overview is given in \cref{tab:results}.

\looseness=-1 In \cref{sec:hardness},  we show that, for any~$\loosness>1$, \STW{\loosness} remains weakly NP-hard even on $m=2$~machines and that it is strongly W[1]-hard when parameterized by the number~$m$ of machines. %
In \cref{sec:l}, we give a \ppt{} algorithm for \STW{\loosness} for each fixed~$\loosness$ and~$m$.  %
Finally, in \cref{sec:s}, we give a \fp{} algorithm for \ICSS{\slack} when parameterized by~$m$ and~$\slack$.  This is in contrast to our result from \cref{sec:hardness} that the parameter combination~$m$ and~$\loosness$ presumably does not give \fp{} tractability results for \STW{\loosness}.

\section{Preliminaries}\label{sec:realpreliminaries}

\paragraph{Basic notation}  We assume that $0\in\mathbb N$.  For two vectors $\vec u=\allowbreak(u_1,\dots,u_k)$ and $\vec v=(v_1,\dots,v_k)$, we write $\vec u\leq \vec v$ if $u_i\leq v_i$ for all~$i\in\{1,\dots,k\}$.  Moreover, we write $\vec u\lneqq \vec v$ if $\vec u\leq \vec v$ and $\vec u\ne \vec v$, that is, $\vec u$ and $\vec v$~differ in at least one component.  Finally, $\vec 1^k$~is the $k$-dimensional vector consisting of $k$~1-entries.

\paragraph{Computational complexity} We assume familiarity with the basic concepts of NP-hardness and \pt{} many-one reductions \citep{GJ79}.  We say that a problem is \emph{(strongly) $C$-hard} for some complexity class~$C$ if it is $C$-hard even if all integers in the input instance are bounded from above by a polynomial in the input size.  Otherwise, we call it \emph{weakly $C$-hard}.

In the following, we introduce the basic concepts of parameterized complexity theory, which are in more detail discussed in corresponding text books \citep{CFK+15,DF13,FG06,Nie06}.

\paragraph{Fixed-parameter algorithms} The idea in \fp{} algorithmics is to accept exponential running times, which are seemingly inevitable in solving NP-hard problems, but to restrict them to one aspect of the problem, the \emph{parameter}.  

Thus, formally, an instance of a \emph{parameterized problem}~\(\Pi\) is a pair~$(x,k)$ consisting of the input~$x$ and the parameter~$k$.  A parameterized problem~$\Pi$ is \emph{\fp{} tractable~(FPT)} with respect to a parameter~$k$ if there is an algorithm solving any instance of~$\Pi$ with size~$n$ in $f(k) \cdot \poly(n)$~time for some computable function~$f$.  Such an algorithm is called a \emph{\fp{} algorithm}. It is potentially efficient for small values of~$k$, in contrast to an algorithm that is merely running in polynomial time for each fixed~$k$ (thus allowing the degree of the polynomial to depend on~$k$).  FPT is the complexity class of \fp{} tractable parameterized problems.  

We refer to the sum of parameters~$k_1+k_2$ as the \emph{combined parameter~$k_1$ and~$k_2$}.  %

\paragraph{Parameterized intractability} To show that a problem is presumably not \fp{} tractable, there is a parameterized analog of NP-hardness theory.  The parameterized analog of NP is the complexity class W[1]\({}\supseteq{}\)FPT, where it is conjectured that FPT\({}\ne{}\)W[1].  A parameterized problem~$\Pi$ with parameter~\(k\) is called \emph{W[1]-hard} if $\Pi$~being \fp{} tractable implies W[1]\({}={}\)FPT.  W[1]-hardness can be shown using a parameterized reduction from a known W[1]-hard problem: a \emph{parameterized reduction} from a parameterized problem~$\Pi_1$ to a parameterized problem~$\Pi_2$ is an algorithm mapping an instance~$I$ with parameter~$k$ to an instance~$I'$ with parameter~$k'$ in time~$f(k)\cdot\poly(|I|)$ such that $k'\leq g(k)$ and $I'$~is a yes-instance for~$\Pi_1$ if and only if $I$~is a yes-instance for~$\Pi_2$, where \(f\)~and~\(g\) are arbitrary computable functions.

\section{A strengthened hardness result}\label{sec:hardness}

In this section, we strengthen a hardness result of \citet{CEHBW04}, who showed that \STW{\loosness{}} is NP-hard for any~$\loosness{}>1$.   This section proves the following theorem:

\begin{theorem}\label[theorem]{w1hard}
  Let $\loosness{}\colon\mathbb N\to\mathbb R$~be such that $\loosness{}(n)\geq 1+n^{-c}$ for some integer~$c\geq 1$ and all~$n\geq 2$.

  Then \STW{\loosness{}(n)} of $n$~jobs on $m$~machines is
  \begin{enumerate}[(i)]
  \item\label{w1hard1} weakly NP-hard for~$m=2$, and
  \item\label{w1hard3} strongly W[1]-hard for parameter~$m$.
  \end{enumerate}
\end{theorem}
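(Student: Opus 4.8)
The plan is to prove both parts by a common construction, parameterized differently for the two statements. For part~(i) I would reduce from \binp{} with a fixed (constant) number of bins, which is weakly NP-hard, and for part~(ii) from the \binp{} problem with unary item sizes parameterized by the number of bins, which is W[1]-hard. In either case the instance to be built consists of a rigid \emph{skeleton} of jobs whose time windows equal their processing times — such jobs trivially satisfy $|\deadl_j-\relt_j|=\proct_j\le\loosness\proct_j$ for every $\loosness\ge 1$ — together with one \emph{item job} per number of the source instance. The skeleton carves out, on the $m$~machines, exactly the gaps into which the item jobs must be packed, and gap sizes encode the bin capacities; a legal schedule then corresponds to a packing of the numbers meeting the target sums (two equal halves for part~(i), one bin per machine for part~(ii)).

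First I would fix the number $n'=O(n)$ of jobs produced, observe that the admitted looseness is then $\loosness(n')\ge 1+(n')^{-c}$, and multiply all numbers of the source instance by a factor $N=\poly(n',c)$ chosen so that the ``wiggle room'' $(\loosness(n')-1)\,\proct_j=(\loosness(n')-1)Na_j$ available to an item job becomes, in absolute terms, large enough to realize the intended placements while still respecting the looseness bound. For part~(i) this makes the integers in the constructed instance large, which is why only \emph{weak} NP-hardness is obtained (consistent with the \pp{}-time algorithm of \cref{anyalphafpt}); but $N$ has only polynomially many bits, so the reduction still runs in polynomial time. For part~(ii) the scaling factor $N$ is polynomial and the source sizes are unary, so all integers stay polynomially bounded; together with the fact that the new parameter~$m$ equals the number of bins (hence is bounded by a function of the source parameter), this yields a parameterized reduction and thus \emph{strong} W[1]-hardness.

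Second I would design the skeleton so that the looseness bound pins every item job down to a short time window, yet each item job is still free to ``choose'' which machine (and, within it, which slot) it occupies; the item jobs assigned to one machine, dropped into the gaps that the skeleton leaves on that machine, must exactly fill those gaps. The correctness proof then has two directions: from a solution of the source instance one reads off a schedule by placing each item job into the gap dictated by the bin of its number and the skeleton everywhere else; conversely, from a schedule one reads off a packing, because the skeleton leaves no room for deviation, so every machine must be exactly filled, forcing the numbers assigned to it to sum to the prescribed capacity.

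The main obstacle is precisely this last design step: reconciling the looseness constraint — which for $\loosness$ close to~$1$ forces the time window of every job to be barely longer than its processing time — with the combinatorial freedom needed to encode an arbitrary subset-sum/bin-packing choice. The skeleton must grant each item job just enough local freedom to represent a binary (respectively $m$-ary) choice and no more, so that no unintended schedules exist; verifying that the skeleton admits exactly the intended degrees of freedom — in particular that item jobs cannot ``borrow'' slack from one another, cannot slide past skeleton jobs, and cannot be split between machines in spurious ways — is the delicate part, and is where most of the work of the construction and its analysis will lie.
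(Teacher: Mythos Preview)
Your high-level plan---reducing from \binp{} and invoking its weak NP-hardness for two bins and its W[1]-hardness for the parameter ``number of bins''---matches the paper's. The concrete construction you sketch, however, diverges from the paper's in a way that leaves the central obstacle unresolved.

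You propose one item job per number with processing time $p_j=Na_j$ and a rigid skeleton whose gaps have size on the order of~$NV$. Multiplicative scaling by~$N$ does not buy what you need: the wiggle room $(\loosness(n')-1)\,Na_j\approx Na_j/(n')^{c}$ scales with~$a_j$, but so does the gap, and for a small item the window of length~$\loosness(n')\,Na_j\approx Na_j$ still cannot span a region of length~$NV$. Hence a small item job cannot be placed freely within its intended gap, and the packing correspondence breaks. You correctly flag this as the delicate step, but the device you name (uniform multiplicative blow-up) does not clear it. A second, related difficulty with a \emph{fixed} skeleton is that an item job of length exceeding its slot (which is what encodes~$a_i$) pushes into the next slot on its machine, yet the next slot's skeleton jobs are pinned in time; this tends to force all item jobs onto the same machine.

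The paper sidesteps both issues with two changes. First, it uses an \emph{additive} base rather than a multiplicative one: every job has processing time at least $B:=(mn)^{c}A$ (namely $B$ or $B+a_i$), so the looseness bound $1+(mn)^{-c}$ already grants each job slack~$A$ \emph{independently of~$a_i$}. Second, there is no fixed skeleton. For each item~$a_i$ the paper creates $m$ jobs $j_i^1,\dots,j_i^m$ sharing one time window of length~$B+A$; since each has processing time at least $B>\tfrac12(B+A)$, their executions pairwise overlap and hence they occupy all $m$ machines. Exactly one of them, $j_i^1$, carries the extra load~$a_i$; the machine it lands on plays the role of the chosen bin. The accumulated delay on a machine is then the sum of the~$a_i$ it receives, and the final deadline $nB+V$ enforces the capacity constraint. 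This yields $mn$ jobs on $m$ machines with looseness $1+A/B=1+(mn)^{-c}\le\loosness(mn)$, and all numbers remain polynomial when the \binp{} sizes are.
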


\noindent Note that \cref{w1hard}, in particular, holds for any constant function~$\lambda(n)>1$.  

We remark that \cref{w1hard} cannot be proved using the NP-hardness reduction given by \citet{CEHBW04}, which reduces \textsc{3-Sat} instances with \(k\)~clauses to \ICS{} instances with \(m=3k\)~machines.  Since \textsc{3-Sat} is trivially fixed-parameter tractable for the parameter number~\(k\) of clauses, the reduction of \citet{CEHBW04} cannot yield \cref{w1hard}.

Instead, to prove \cref{w1hard}, we give a parameterized \pt{} ma\-ny-one reduction from \binp{} with $m$~bins and $n$~items to \STW{\loosness{}(mn)} with $m$~machines and $mn$~jobs.
\decprob{\binp{}}
{A bin volume~$V\in\mathbb N$, a list~$a_1,\dots,a_n\in\mathbb N$ of items, and a number~$m\leq n$ of bins.}
{Is there a partition~$S_1\uplus\dots\uplus S_m=\{1,\dots,n\}$ such that $\sum_{i\in S_k}a_i\leq V$ for all~$1\leq k\leq m$?}
\noindent Since \binp{} is weakly NP-hard for $m=2$~bins and W[1]-hard parameterized by~$m$ even if all input numbers are polynomial in~$n$ \citep{JKMS13}, \cref{w1hard} will follow. 

Our reduction, intuitively, works as follows: for each of the \(n\)~items~\(a_i\) in a \binp{} instance with \(m\)~bins of volume~\(V\), we create a set~\(J_i:=\{j^1_i,\dots,j^m_i\}\) of \(m\)~jobs that have to be scheduled on \(m\)~mutually distinct machines.  Each machine represents one of the \(m\)~bins in the \binp{} instance.  Scheduling job~\(j_i^1\) on a machine~\(k\) corresponds to putting item~\(a_i\) into bin~\(k\) and will take \(B+a_i\)~time of machine~\(k\), where \(B\)~is some large integer chosen by the reduction.  If \(j_i^1\)~is not scheduled on machine~\(k\), then a job in~\(J_i\setminus\{j_i^1\}\) has to be scheduled on machine~\(k\), which will take only \(B\)~time of machine~\(k\).  Finally, we choose the latest deadline of any job as~\(nB+V\). Thus, since all jobs have to be finished by time~\(nB+V\) and since there are \(n\)~items, for each machine~\(k\), the items~\(a_i\) for which \(j_i^1\) is scheduled on machine~\(k\) must sum up to at most~\(V\) in a feasible schedule. This corresponds to satisfying the capacity constraint of~\(V\) of each bin.  

Formally, the reduction works as follows and is illustrated in \cref{fig:constrw1}.

\tikzstyle{interval}=[{[-)}, thick]
\tikzstyle{milestone}=[dotted]
\tikzstyle{process}=[thick, double, double distance=2pt, line cap=rect, shorten >=1.8pt, shorten <=1.8pt]
\begin{figure*}
  \centering
  \includegraphics{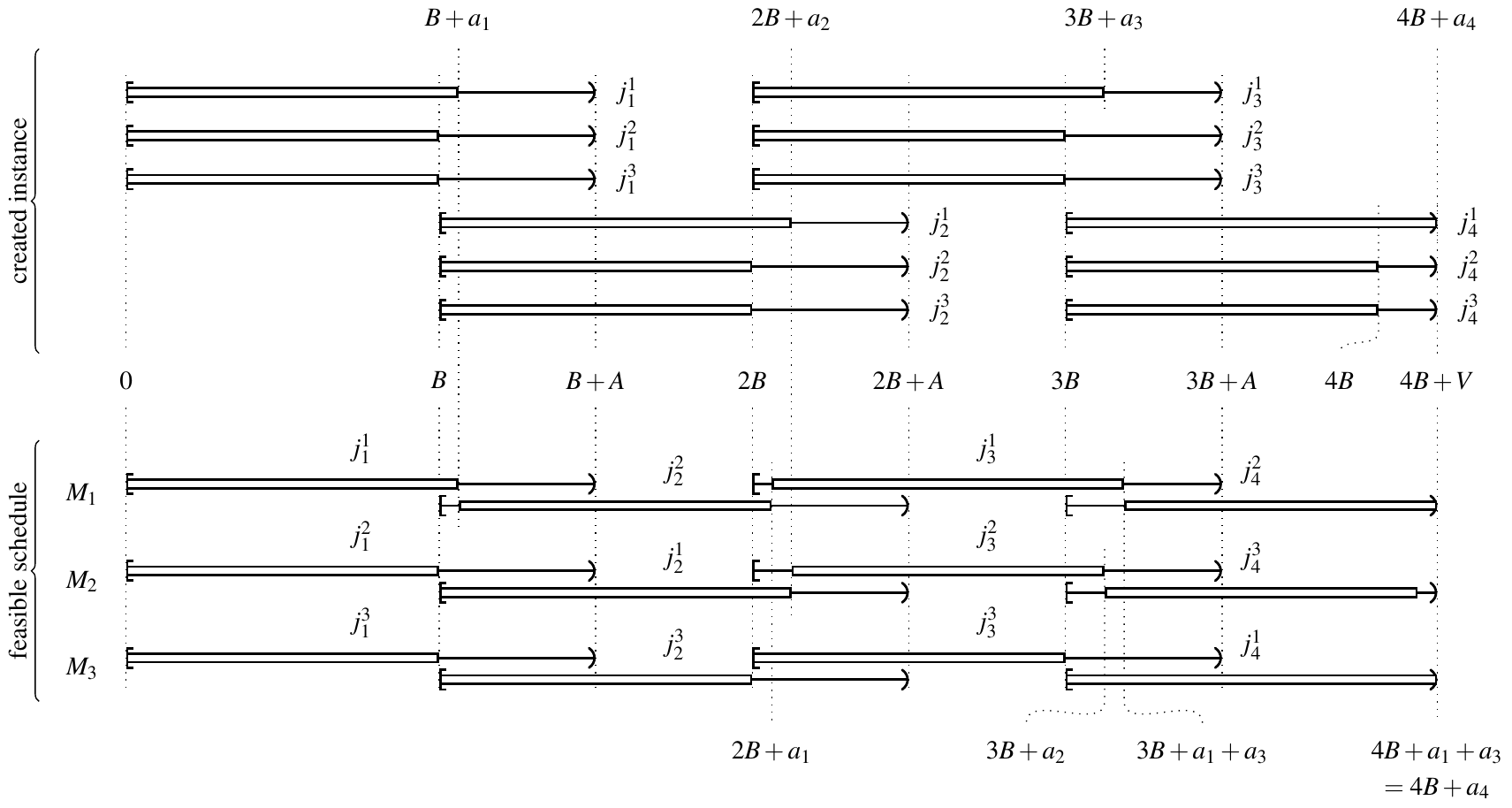}
  \caption{Reduction from \binp{} with four items~\(a_1=1,a_2=a_3=2,a_4=3\), bin volume~\(V=3\), and \(m=3\)~bins to \STW{3/2}. That is, \cref{constrw1} applies with \(c=1\), \(A=8\), and \(B=3\cdot 4\cdot 8=96\).  The top diagram shows (not to scale) the jobs created by \cref{constrw1}.  Herein, the processing time of each job is drawn as a rectangle of corresponding length in an interval being the job's time window.  The bottom diagram shows a feasible schedule for three machines~$M_1,M_2$, and~$M_3$ that corresponds to putting items~\(a_1\) and~\(a_3\) into the first bin, item~\(a_2\) into the second bin, and~\(a_4\) into the third bin.}
  \label{fig:constrw1}
\end{figure*}

\begin{construction}\label[construction]{constrw1}
  Given a \binp{} instance~$I$ with $n\geq 2$~items $a_1,\dots,a_n$ and $m\leq n$~bins, and $\loosness{}\colon\mathbb N\to\mathbb R$ such that $\loosness{}(n)\geq 1+n^{-c}$ for some integer~$c\geq 1$ and all~$n\geq 2$, we construct an \ICS{} instance with $m$~machines and $mn$~jobs as follows.  First, let
  \begin{align*}
    A&:=\sum_{i=1}^n a_i&&\text{and}&B&:=(mn)^c\cdot A\geq 2A.
  \end{align*}
  If $V>A$, then $I$~is a yes-instance of \binp{} and we return a trivial yes-instance of \ICS{}.

  Otherwise, we have $V\leq A$ and construct an instance of \ICS{} as follows: for each~$i \in \{1, \ldots, n\}$, we introduce a set~$J_i:=\{j^1_i,\dots,j^m_i\}$ of jobs.  For each job~$j\in J_i$, we choose the release time
  \begin{align}
    \relt_{j}&:=(i-1)B,\notag{}\\
    \intertext{the processing time}
    \label{bpconstr}\proct_j&:=\begin{cases}
      B+a_i&\text{if $j=j_i^1$,}\\
      B&\text{if $j\ne j_i^1$,}
    \end{cases}\\
    \intertext{and the deadline}
    \deadl_{j}&:=
                \begin{cases}
                  iB+A&\text{if $i<n$},\\
                  iB+V&\text{if $i=n$}.
                \end{cases}\notag{}
  \end{align}
  This concludes the construction.\qed
\end{construction}

\begin{remark}
  \cref{constrw1} outputs an \ICS{} instance with \emph{agreeable deadlines}, that is, the deadlines of the jobs have the same relative order as their release times.  Thus, in the offline scenario, all hardness results of \cref{w1hard} will also hold for instances with agreeable deadlines.

  In contrast, agreeable deadlines make the problem significantly easier in the online scenario: \citet{CMS16} showed an online-algorithm with constant competitive ratio for \ICS{} with agreeable deadlines, whereas there is a lower bound of~$n$ on the competitive ratio for general instances \citep{Sah13}.
\end{remark}

\noindent In the remainder of this section, we show that \cref{constrw1} is correct and satisfies all structural properties that allow us to derive \cref{w1hard}. %

First, we show that \cref{constrw1} indeed creates an \ICS{} instance with small looseness. 
\begin{lemma}\label[lemma]{indeed 1+e}
  Given a \binp{} instance with $n\geq 2$~items and $m$~bins, \cref{constrw1} outputs an \ICS{} instance with
  \begin{enumerate}[(i)]
  \item\label{1+e1} at most $m$~machines and $mn$~jobs and
  \item\label{1+e2} looseness~$\loosness(mn)$.
  \end{enumerate}
\end{lemma}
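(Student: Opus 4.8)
The plan is to verify the two assertions directly against \cref{constrw1}, treating its two branches separately. First I would dispose of the case~$V>A$, where \cref{constrw1} returns a trivial yes-instance of \ICS{} --- which we may take to be a single job with release time~$0$, processing time~$1$, and deadline~$1$ on a single machine. This instance uses $1\le m$~machines and has $1\le mn$~jobs (note $mn\ge 2$ since $m\ge 1$ and $n\ge 2$) and looseness~$1$, and $1\le\loosness(mn)$ because $\loosness(mn)\ge 1+(mn)^{-c}>1$. Hence both parts of the lemma hold in this case, so from now on I may assume $V\le A$.

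For \cref{1+e1} in the case $V\le A$, the construction introduces exactly $m$~machines, and for each~$i\in\{1,\dots,n\}$ exactly the $m$~jobs of~$J_i$, giving $mn$~jobs in total; the bound then follows by counting. For \cref{1+e2}, I would fix an arbitrary job~$j\in J_i$ and note that its release time~$\relt_j=(i-1)B$ and its deadline depend only on~$i$, so its time-window length is $\deadl_j-\relt_j=B+A$ if $i<n$ and $\deadl_j-\relt_j=B+V$ if $i=n$; since $V\le A$, in both cases $\deadl_j-\relt_j\le B+A$. On the other hand, $\proct_j\ge B$ for every job by~\eqref{bpconstr}. The one substantive point is the identity $(mn)^{-c}B=(mn)^{-c}\cdot(mn)^cA=A$, which is exactly what the choice $B:=(mn)^cA$ is calibrated for: together with $\loosness(mn)\ge 1+(mn)^{-c}$ and $\proct_j\ge B$ it gives $\loosness(mn)\,\proct_j\ge(1+(mn)^{-c})B=B+A\ge\deadl_j-\relt_j$. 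Thus $|\deadl_j-\relt_j|\le\loosness(mn)\,\proct_j$ for every job~$j$, which is precisely the statement that the output instance has looseness~$\loosness(mn)$.

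I do not expect a genuine obstacle: the argument is essentially bookkeeping. The only things to be careful about are keeping the two window lengths straight ($B+A$ for $i<n$ and $B+V\le B+A$ for $i=n$) and observing that $\proct_j\ge B$ holds regardless of whether $j=j_i^1$, so that the $(mn)^{-c}$-fraction of~$B$ guaranteed by the looseness always absorbs the additive term~$A$.
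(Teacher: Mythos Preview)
Your proof is correct and follows essentially the same approach as the paper: bound the window length by~$B+A$, bound the processing time from below by~$B$, and use the calibration $B=(mn)^cA$ to convert the additive~$A$ into the multiplicative $(mn)^{-c}$ matching the looseness assumption. You are slightly more explicit than the paper in treating the trivial branch $V>A$ and in distinguishing the two deadline cases $i<n$ versus $i=n$, but the substance is the same.
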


\begin{proof}
It is obvious that the output instance has at most $mn$~jobs and $m$~machines and, thus, \eqref{1+e1} holds.  
  
  Towards \eqref{1+e2}, observe that $mn\geq n\geq 2$, and hence, for each~$i\in\{1,\dots,n\}$ and each job~$j\in J_i$, \eqref{bpconstr}  yields
  \begin{align*}
      \frac{|\deadl_j-\relt_j|}{\proct_j}&{}\leq \frac{(iB+A)-(i-1)B}{B}
                                             {}=\frac{B+A}{B}
                                             {}= 1+\frac AB\\
      &{}= 1+\frac{A}{(mn)^c\cdot A}= 1+(mn)^{-c}\leq \loosness{}(mn).\ifspringer\hspace{1cm}\qed\else\qedhere\fi
  \end{align*}
\end{proof}

\noindent We now show that \cref{constrw1} runs in polynomial time and that, if the input \binp{} instance has polynomially bounded integers, then so has the output \ICS{} instance.

\begin{lemma}\label[lemma]{everything poly}
  Let $I$~be a \binp{} instance with $n\geq 2$ items~$a_1,\allowbreak\dots,a_n$ and let $\amax:=\max_{1\leq i\leq n}a_i$. \cref{constrw1} applied to~$I$
  \begin{enumerate}[(i)]
  \item\label{poly1} runs in time polynomial in~$|I|$ and
  \item\label{poly2}  outputs an \ICS{} instance whose release times and deadlines are bounded by a polynomial in $n+\amax$.
  \end{enumerate}
\end{lemma}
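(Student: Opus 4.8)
The plan is to verify both claims by direct inspection of \cref{constrw1}, treating the exponent~$c$ as a fixed constant: it is determined by the looseness function~$\loosness$ and is not part of the input. I would first dispose of the case $V>A$, in which \cref{constrw1} returns a fixed trivial yes-instance of \ICS{} of constant size; this takes constant time and contributes only $O(1)$ to every quantity, so from now on one may assume $V\leq A$.

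For~\eqref{poly1}, observe that, since $c$ is constant, $(mn)^c$ has bit-length $O(\log|I|)$, so $A=\sum_{i=1}^n a_i$ and $B=(mn)^c\cdot A$ are computable in time polynomial in~$|I|$ and have bit-length $O(c\log(mn)+\log A)=\poly(|I|)$. The construction then produces exactly $mn\leq n^2$ jobs, and for each job the release time $(i-1)B$, the processing time from~\eqref{bpconstr}, and the corresponding deadline ($iB+A$ or $iB+V$) are each obtained by a constant number of additions and comparisons of numbers of bit-length~$\poly(|I|)$. Summing over the $mn$ jobs yields a running time polynomial in~$|I|$.

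For~\eqref{poly2}, I would push through crude size estimates. As $a_i\leq\amax$ for all~$i$ and $m\leq n$, we have $A\leq n\amax$ and hence $B=(mn)^c\cdot A\leq (n^2)^c\cdot n\amax=n^{2c+1}\amax$. Every release time has the form $(i-1)B$ with $1\leq i\leq n$, so it is at most $nB\leq n^{2c+2}\amax$, and every deadline is at most $nB+A\leq n^{2c+2}\amax+n\amax$ (using $V\leq A$ in the case $i=n$). Since $c$ is a constant, these are polynomials in~$n$ times~$\amax$; as $n\amax\leq(n+\amax)^2$, they are bounded by a polynomial in~$n+\amax$, which is~\eqref{poly2}.

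I do not expect a genuine obstacle here: the only point that needs to be stated explicitly is that $c$ is a fixed constant, since otherwise the factor $(mn)^c$ in~$B$ need not be polynomially bounded and both parts of the lemma would break. Everything else is a routine calculation.
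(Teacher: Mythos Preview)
Your proposal is correct and follows essentially the same approach as the paper: bound $A$ and $B$ polynomially in $n+\amax$ (using $m\leq n$ and that $c$ is a fixed constant), then observe that all release times and deadlines are simple arithmetic combinations of $A$, $B$, and $n$. You are somewhat more explicit than the paper---you handle the $V>A$ case separately, spell out bit-length considerations for~\eqref{poly1}, and give concrete degree bounds for~\eqref{poly2}---but the underlying argument is the same.
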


\begin{proof}
  We first show \eqref{poly2}, thereafter we show \eqref{poly1}.

  \eqref{poly2} It is sufficient to show that the numbers~$A$ and~$B$ in \cref{constrw1} are bounded polynomially in~$n+\amax$ since all release times and deadlines are computed as sums and products of three numbers not larger than~$A$, $B$, or~$n$.  Clearly, $A=\sum_{i=1}^na_i\leq n\cdot\max_{1\leq i\leq n}a_i$, which is polynomially bounded in~$n+\amax$.  Since~$mn\leq n^2$, also $B=(mn)^c\cdot A$~is polynomially bounded in~$n+\amax$.

  \eqref{poly1} The sum~$A=\sum_{1=1}^na_i$ is clearly computable in time polynomial in the input length.  It follows that also $B=(mn)^c\cdot A$ is computable in polynomial time.
\ifspringer\qed\fi
\end{proof}

\noindent It remains to prove that \cref{constrw1} maps yes-instances of \binp{} to yes-instances of \ICS{}, and no-instances to no-instances.

\begin{lemma}\label[lemma]{constrw1 correct}
  Given a \binp{} instance~$I$ with $m$~bins and the items $a_1,\dots,a_n$, \cref{constrw1} outputs an \ICS{} instance~$I'$ that is a yes-instance if and only if $I$~is.
\end{lemma}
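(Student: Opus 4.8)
The plan is to prove the two directions of the equivalence separately, using the intuition already sketched before the construction: scheduling job $j_i^1$ on machine $k$ encodes placing item $a_i$ into bin $k$.

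First I would dispose of the trivial case: if $V > A$, then \cref{constrw1} returns a trivial yes-instance, and indeed $I$ is a yes-instance of \binp{} because all items together sum to $A \le V$, so they all fit in a single bin. Hence we may assume $V \le A$ for the rest of the proof.

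For the forward direction, suppose $I$ is a yes-instance with partition $S_1 \uplus \dots \uplus S_m = \{1,\dots,n\}$ and $\sum_{i \in S_k} a_i \le V$ for every $k$. I would construct a schedule as follows: for each $i$, if $i \in S_k$, schedule $j_i^1$ on machine $k$ and schedule the remaining $m-1$ jobs of $J_i$ arbitrarily on the other $m-1$ machines. On any fixed machine $k$, the jobs assigned to it are, for $i = 1,\dots,n$, either $j_i^1$ (if $i \in S_k$, length $B + a_i$) or some length-$B$ job. I would schedule them in order of increasing $i$, starting each one as early as possible. The key computation is to verify by induction on $i$ that the job from $J_i$ on machine $k$ finishes by time $iB + \sum_{i' \in S_k,\, i' \le i} a_{i'} \le iB + V \le iB + A$, so its release time $(i-1)B$ and its deadline (which is $iB+A$ for $i<n$ and $iB+V$ for $i=n$) are both respected; the $i=n$ case uses $\sum_{i\in S_k} a_i \le V$ precisely. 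Since the jobs on a machine are processed in release-time order with no overlap, conditions (1)--(3) of \ICS{} all hold.

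For the backward direction, suppose $I'$ has a feasible schedule. The crucial observation is a counting/load argument: since each of the $n$ sets $J_i$ contains $m$ jobs that must go on $m$ distinct machines (there is simply no room for two jobs of $J_i$ on one machine, as two length-$\ge B$ jobs with the same release time $(i-1)B$ would need to finish by $iB+A < (i-1)B + 2B$), exactly one job of each $J_i$ lies on each machine. Define $S_k := \{\, i : j_i^1 \text{ is scheduled on machine } k \,\}$; this is a partition of $\{1,\dots,n\}$. Now fix a machine $k$. It runs exactly one job from each $J_i$, with total processing time $nB + \sum_{i \in S_k} a_i$. All these jobs have deadline at most $nB + V$ (with the tightest being the $J_n$-job, deadline exactly $nB+V$), and all start no earlier than $0$; also the release times force the $J_i$-jobs to be processed in order of increasing $i$, so there is no idle time gain to be had — more carefully, the $J_n$-job cannot start before time $(n-1)B$ while finishing by $nB+V$, and iterating this backwards shows the machine cannot afford idle time beyond what the $a_i$'s already consume. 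Hence $nB + \sum_{i \in S_k} a_i \le nB + V$, i.e.\ $\sum_{i\in S_k} a_i \le V$, so $I$ is a yes-instance.

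I expect the main obstacle to be the backward direction, specifically making the "no idle time" / load argument fully rigorous: one must argue that a feasible schedule cannot "cheat" by leaving gaps on a machine, which follows from the nesting of the time windows $[(i-1)B,\, iB+A)$ together with the fact that the total work $nB$ of the mandatory $B$-blocks already nearly fills the horizon $[0, nB+V)$. Concretely, I would prove by downward induction on $i$ that on each machine the $J_i$-job must finish by time $iB + V$ (equivalently, start by $iB + V - p$ where $p$ is its length), using that the $J_{i+1},\dots,J_n$ jobs occupy at least $(n-i)B$ units of time all lying in $[iB, nB+V)$; applying this with $i = n$ on each machine and summing gives the capacity bound. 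The forward direction is a routine explicit construction once the inductive finishing-time bound is set up.
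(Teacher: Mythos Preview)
Your proposal is correct and follows the paper's structure closely: same forward construction, same definition of~$S_k$ in the backward direction, and the same observation that the $m$~jobs of each~$J_i$ must land on $m$~distinct machines.

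The one substantive difference is how you finish the backward direction. The paper argues by \emph{forward} induction that the $J_i$-job on machine~$k$ cannot finish before time $iB+\sum_{j\in S_k,\,j\le i}a_j$, and then derives a contradiction at $i=n$. Your total-load argument is actually cleaner and already rigorous as you first state it: the $n$~jobs on machine~$k$ are pairwise non-overlapping and all lie in $[0,\,nB+V)$ (every deadline is at most $nB+V$ since $iB+A\le nB+V$ for $i<n$), so their total processing time $nB+\sum_{i\in S_k}a_i$ is at most $nB+V$. That's it---no induction needed. Your subsequent worry about ``idle time'' is misplaced: gaps on a machine only \emph{reduce} the processing time that fits in $[0,nB+V)$, so they cannot help an overfull bin. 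Consequently the downward induction you sketch in the final paragraph, while plausible, is unnecessary extra work; you may drop it entirely.
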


\begin{proof}
  ($\Rightarrow$) Assume that $I$~is a yes-instance for \binp{}.  Then, there is a partition~$S_1\uplus \dots\uplus S_m=\{1,\dots,n\}$ such that $\sum_{i\in S_k} a_i \leq V$ for each~$k\in\{1,\dots,m\}$.  We construct a feasible schedule for~$I'$ as follows.  For each~$i \in \{1, \ldots, n\}$ and $k$~such that $i\in S_k$, we schedule~$j^1_i$ on machine~$k$ in the interval
\begin{align*}
  \Biggl[(i-1)B+\smashoperator{\sum_{j\in S_k, j < i}} a_j\quad,\quad iB+\smashoperator{\sum_{j \in S_k, j < i}} a_j+a_i\Biggr)
\end{align*}
and each of the $m-1$~jobs~$J_i\setminus\{j_i^1\}$ on a distinct machine~$\ell\in\{1,\dots,m\}\setminus\{k\}$ in the interval
\begin{align*}
  \Biggl[(i-1)B+\smashoperator{\sum_{j\in S_\ell, j < i}} a_j\quad,\quad iB+\smashoperator{\sum_{j \in S_\ell, j < i}} a_j\Biggr).
\end{align*}
It is easy to verify that this is indeed a feasible schedule.

($\Leftarrow$) Assume that $I'$~is a yes-instance for \ICS{}.  Then, there is a feasible schedule for~$I'$.  We define a partition~$S_1\uplus\dots\uplus S_m=\{1,\dots,n\}$ for~$I$ as follows.  For each $k\in\{1,\dots,m\}$, let
\begin{align}
  S_k&:=\{i\in\{1,\dots,n\}\mid \text{$j_i^1$ is scheduled on machine~$k$}\}.\label{skdef}
\end{align}
Since, for each $i\in\{1,\dots,n\}$, the job~$j_i^1$ is scheduled on exactly one machine, this is indeed a partition.  We show that $\sum_{i\in S_k} a_i \leq V$ for each $k\in\{1,\dots,m\}$.  Assume, towards a contradiction, that there is a~$k$ such that
\begin{align}
\sum_{i\in S_k}a_i> V.\label{overfilled}  
\end{align}
By \eqref{bpconstr}, for each $i \in \{1, \ldots, n\}$, the jobs in~$J_i$ have the same release time, each has processing time at least~$B$, and the length of the time window of each job is at most~$B+A\leq B+B/2 < 2B$.  Thus, in any feasible schedule, the execution times of the $m$~jobs in~$J_i$ mutually intersect.  Hence, the jobs in~$J_i$ are scheduled on $m$~mutually distinct machines.  By the pigeonhole principle, for each~$i\in\{1,\ldots,n\}$, exactly one job~$j_i^*\in J_i$ is scheduled on machine~$k$.  %
We finish the proof by showing that,
\begin{align}
   \label{slowclaim}\text{\parbox[t]{7cm}{$\forall i\in\{1,\dots,n\}$, job~$j_i^*$ is not finished before time~$iB+\smashoperator{\sum_{j \in S_k,j \leq i}} a_j$.}}
\end{align}
This claim together with \eqref{overfilled} then yields that job~$j_n^*$ is not finished before
\begin{align*}
  nB+\smashoperator{\sum_{j \in S_k,j \leq n}} a_j=  nB+\smashoperator{\sum_{j \in S_k}} a_j>nB+V,
\end{align*}
which contradicts the schedule being feasible, since jobs in~$J_n$ have deadline~$nB+V$ by \eqref{bpconstr}.  It remains to prove \eqref{slowclaim}.  We proceed by induction.

The earliest possible execution time of~$j_1^*$ is, by \eqref{bpconstr}, time~$0$.  The processing time of~$j_1^*$ is~$B$ if $j_1^*\ne j_1^1$, and $B+a_1$ otherwise.  By \eqref{skdef}, $1\in S_k$ if and only if $j_1^1$ is scheduled on machine~$k$, that is, if and only if $j_1^*=j_1^1$.  Thus, job~$j_1^*$ is not finished before $B+\sum_{j\in S_k,j\leq i}a_j$ and \eqref{slowclaim} holds for~$i=1$.  Now, assume that \eqref{slowclaim} holds for~$i-1$.  We prove it for~$i$.  Since $j_{i-1}^*$ is not finished before $(i-1)B+\sum_{j\in S_k,j\leq i-1}a_j$, this is the earliest possible execution time of~$j_i^*$. The processing time of~$j_i^*$ is~$B$ if $j_i^*\ne j_i^1$ and $B+a_i$ otherwise.  By \eqref{skdef}, $i\in S_k$ if and only if $j_i^*=j_i^1$.  Thus, job~$j_i^*$ is not finished before $iB+\sum_{j\in S_k,j\leq i}a_j$ and \eqref{slowclaim} holds.
\ifspringer\qed\fi
\end{proof}

\noindent We are now ready to finish the proof of \cref{w1hard}.

\begin{proof}[\ifspringer\else{}Proof \fi{}of \cref{w1hard}]
  By \cref{indeed 1+e,everything poly,constrw1 correct}, \cref{constrw1} is a polynomial-time many-one reduction from \binp{} with $n\geq 2$~items and $m$~bins to \STW{\loosness{}(mn)}, where $\loosness{}\colon\mathbb N\to\mathbb R$ such that $\loosness{}(n)\geq 1+n^{-c}$ for some integer~\(c\geq 1\) and all~$n\geq 2$.  We now show the points \eqref{w1hard1} and \eqref{w1hard3} of \cref{w1hard}.

  \eqref{w1hard1} follows since \binp{} is weakly NP-hard for~$m=2$ \citep{JKMS13} and since, by \cref{indeed 1+e}\eqref{1+e1}, \cref{constrw1} outputs instances of \STW{\loosness{}(mn)} with $m$~machines.

  \eqref{w1hard3} follows since \binp{} is W[1]-hard parameterized by~$m$ even if the sizes of the $n$~items are bounded by a polynomial in~$n$ \citep{JKMS13}.  In this case, \cref{constrw1} generates \STW{\loosness{}(mn)} instances for which all numbers are bounded polynomially in the number of jobs by \cref{everything poly}\eqref{poly2}.  Moreover, \cref{constrw1} maps the $m$~bins of the \binp{} instance to the $m$~machines of the output \ICS{} instance.
\ifspringer\qed\fi
\end{proof}

\noindent Concluding this section, it is interesting to note that \cref{w1hard} also shows W[1]-hardness of \STW{\loosness} with respect to the \emph{height} parameter considered by \citet{CEHBW04}:

\begin{definition}[Height]\label[definition]{def-st}
  For an \ICS{} instance and any time~$t\in\mathbb N$, let
\[
S_t:=\{j\in J\mid t\in[\relt_j,\deadl_j)\}
\]
denote the set of jobs whose time window contains time~$t$.  The \emph{height} of an instance is
\[
\overl{}:=\max_{t\in\mathbb N}|S_t|.
\]
\end{definition}

\begin{proposition}\label[proposition]{prop:hardh}
  Let $\loosness{}\colon\mathbb N\to\mathbb R$~be such that $\loosness{}(n)\geq 1+n^{-c}$ for some integer~$c\geq 1$ and all~$n\geq 2$.

  Then \STW{\loosness{}(n)} of $n$~jobs on $m$~machines is W[1]-hard parameterized by the height~$\overl{}$.
\end{proposition}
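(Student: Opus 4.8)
The plan is to revisit \cref{constrw1} and observe that the \ICS{} instances it produces already have small height, so that the very same construction serves as a parameterized reduction for the parameter~$\overl{}$ as well. Recall that in \cref{constrw1} every job in~$J_i$ has release time~$(i-1)B$ and deadline at most~$iB+A$ (the case $i=n$ only lowers the deadline, since we are in the nontrivial case $V\leq A$), so the time window of each job in~$J_i$ is contained in the interval~$[(i-1)B,iB+A)$.

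First I would show that no point in time lies in the time windows of more than two of the groups~$J_1,\dots,J_n$. Since $mn\geq 2$ and $c\geq 1$, we have $B=(mn)^c\cdot A\geq 2A>A$, and therefore, for any~$i$ and any~$\ell\geq i+2$, the window of~$J_i$ ends at~$iB+A<(i+1)B\leq(\ell-1)B$, which is the left endpoint of the window of~$J_\ell$. Hence the windows of~$J_i$ and~$J_\ell$ are disjoint whenever $|i-\ell|\geq 2$, so for every time~$t\in\mathbb N$ the set~$S_t$ from \cref{def-st} meets at most two of the sets~$J_1,\dots,J_n$; since $|J_i|=m$ for each~$i$, this yields $|S_t|\leq 2m$ and thus $\overl{}\leq 2m$ for the output instance.

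Combining this observation with \cref{indeed 1+e,everything poly,constrw1 correct}, \cref{constrw1} is a polynomial-time many-one reduction from \binp{} with $n\geq 2$ items and $m$ bins to \STW{\loosness(mn)} that, in addition, outputs instances of height~$\overl{}\leq 2m$. In particular, it is a parameterized reduction from \binp{} parameterized by the number~$m$ of bins to \STW{\loosness(mn)} parameterized by the height~$\overl{}$, with parameter bound~$g(m)=2m$. Since \binp{} is W[1]-hard parameterized by~$m$ \citep{JKMS13}, the proposition follows after renaming the number~$mn$ of output jobs to~$n$, exactly as in the proof of \cref{w1hard}.

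There is no serious obstacle here: the only nonroutine point is the height bound, and it is immediate from $B\geq 2A$, which was already used (in the form $A\leq B/2$) in the proof of \cref{constrw1 correct}. The one detail to keep in mind is that the looseness guarantee of \cref{indeed 1+e} is stated for the argument~$mn$, which is precisely the number of jobs in the output instance, so the final renaming is consistent with the hypothesis $\loosness(n)\geq 1+n^{-c}$ for all~$n\geq 2$.
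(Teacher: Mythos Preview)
Your proposal is correct and follows essentially the same approach as the paper: both reuse \cref{constrw1} verbatim and add the single observation that the output instance has height at most~$2m$, derived from $A\leq B/2$ so that the window of~$J_i$ ends before $(i+1)B$ and hence only windows of consecutive groups can overlap. The remaining steps---invoking \cref{indeed 1+e,everything poly,constrw1 correct} and the W[1]-hardness of \binp{} parameterized by~$m$---are identical.
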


\begin{proof}
  \cref{prop:hardh} follows in the same way as \cref{w1hard};  one additionally has to prove that \cref{constrw1} outputs \ICS{} instances of height at most~$2m$.
  To this end, observe that, by~\eqref{bpconstr}, for each $i\in\{1,\dots,n\}$, there are $m$~jobs released at time~$(i-1)B$ whose deadline is no later than~$iB+A<(i+1)B$ since \(A\leq B/2\).
  These are all jobs created by \cref{constrw1}.
  Thus, $S_t$ contains only the $m$~jobs released at time~$\lfloor t/B\rfloor\cdot B$ and the $m$~jobs released at time~$\lfloor t/B-1\rfloor\cdot B$, which are $2m$~jobs in total.  \ifspringer\qed\fi
\end{proof}

\begin{remark}
  \cref{prop:hardh} complements findings of \citet{CEHBW04}, who provide a \fp{} tractability result for \ICS{} parameterized by~$\overl{}+\slack$: our result shows that their algorithm presumably cannot be improved towards a \fp{} tractability result for \ICS{} parameterized by~$\overl{}$ alone.
\end{remark}

\section{An algorithm for bounded looseness}
\label{sec:l}
\looseness=-1 In the previous section, we have seen that \STW{\loosness} for any~$\loosness>1$ is strongly W[1]-hard parameterized by~$m$ and weakly NP-hard for~$m=2$.  %
We complement this result by the following theorem, which yields a \ppt{} algorithm for each constant~$m$ and~$\loosness$.

\begin{theorem}\label{anyalphafpt}
  \STW{\loosness{}} is solvable in $\ell^{O(\loosness m)} \cdot n+O(n\log n)$~time, where $\ell:=\max_{j\in J} |\deadl_j -\relt_j|$.
\end{theorem}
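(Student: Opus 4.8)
The plan is to recast \STW{\loosness} as the problem of choosing one start time per job and to solve it by a dynamic program that sweeps the time axis from left to right over the $O(n)$ relevant time points.

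First I would preprocess. Sorting the jobs by release time costs $O(n\log n)$, which accounts for the additive term. By interval‑graph colouring, a distribution of the jobs onto $m$~identical machines respecting the machine‑capacity constraint exists if and only if one can pick a start time $s_j\in\{\relt_j,\ldots,\deadl_j-\proct_j\}$ for every job~$j$ so that no integer point of time is covered by more than~$m$ of the half‑open execution intervals $[s_j,s_j+\proct_j)$; a proper $m$‑colouring of the resulting interval graph is exactly an assignment of jobs to machines. Note that $\proct_j\le\deadl_j-\relt_j\le\ell$ for every job. I would then cut the time axis at every maximal time interval that is met by no time window: since every window has length at most~$\ell$, no job's execution can lie in such an interval, so the instance decomposes into independent subinstances, and inside a subinstance with~$n'$ jobs all relevant times lie in an interval of length $O(n'\ell)$ and consecutive distinct release times are at most~$\ell$ apart. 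Solving the subinstances separately and summing the per‑subinstance cost $\ell^{O(\loosness m)}\cdot n'$ yields the overall factor $\ell^{O(\loosness m)}\cdot n$.

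Second, within one subinstance I would run a dynamic program over the distinct release times $\tau_1<\tau_2<\cdots<\tau_q$ with $q\le n'$. The table entry for~$\tau_i$ is the set of reachable \emph{configurations}, where a configuration encodes a partial start‑time selection for the jobs released before~$\tau_i$, retaining only what can still influence jobs placed later. Because every not‑yet‑placed job is released at time at least~$\tau_i$ and has a window of length at most~$\ell$, only the behaviour of the committed schedule inside the length‑$\ell$ window $[\tau_i,\tau_i+\ell)$ can still matter, and $\tau_{i+1}\le\tau_i+\ell$; anything strictly before~$\tau_i$ is frozen and has already been checked against the capacity bound. A transition from~$\tau_i$ to~$\tau_{i+1}$ consists of choosing start times in $[\tau_i,\tau_{i+1})$ for the jobs released exactly at~$\tau_i$ (and, if useful, deferring some of them, which the configuration at~$\tau_{i+1}$ then carries), discarding a branch whenever a job can no longer meet its deadline or the bound~$m$ would be exceeded. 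The instance is a yes‑instance if and only if some configuration survives past the last deadline, and correctness follows from a routine forward/backward induction matching partial selections with reachable configurations.

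The crux, and the place where the looseness enters the exponent, is to bound the number of configurations by $\ell^{O(\loosness m)}$. Here I would exploit that $\proct_j\ge(\deadl_j-\relt_j)/\loosness$. Jobs with $\deadl_j-\relt_j=\proct_j$ are rigid, having a single feasible start time; I would remove them from the branching and pre‑aggregate their forced execution intervals into a per‑time‑unit \emph{background load} that transitions only have to respect rather than store, so that a configuration merely describes how the remaining jobs occupy the~$m$ machines within the length‑$\ell$ window. The part I expect to be genuinely technical is the structural claim that, after this reduction and an appropriate canonicalisation of where jobs are placed, each machine's committed behaviour inside the window admits only $\ell^{O(\loosness)}$ descriptions — intuitively because the looseness bound forbids too many wide‑windowed jobs from being stacked on one machine while pinning down the narrow‑windowed ones, so the ``free'' part of each machine's profile is cheap to encode — which, multiplied over the $m$~machines, gives $\ell^{O(\loosness m)}$ configurations. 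Granting this, each of the $O(n')$ transitions is computable in $\ell^{O(\loosness m)}$ time by enumerating configurations and placements, and with the decomposition above this produces the stated bound $\ell^{O(\loosness m)}\cdot n+O(n\log n)$. The reduction to start‑time selection, the decomposition, and the sweep itself are routine; the state‑compression lemma and the bookkeeping reconciling the background load with the free profile are where the real work lies.
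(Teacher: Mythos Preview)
Your high-level architecture---a left-to-right sweep DP whose state you want to bound by $\ell^{O(\lambda m)}$---matches the paper's, but the step you yourself flag as ``genuinely technical'', the state-compression bound, is where the entire proof lives, and your sketch does not supply it. The intuition that looseness ``forbids too many wide-windowed jobs from being stacked on one machine while pinning down the narrow-windowed ones'' does not, as stated, yield any $\ell^{O(\lambda)}$ bound on per-machine descriptions, and removing rigid jobs does not close the gap: your configuration still has to carry the set of \emph{deferred} jobs (released before~$\tau_i$, not yet started), and nothing in your outline bounds how many such jobs there can be.

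What the paper actually proves, and what you are missing, is a bound on the \emph{height}: in any yes-instance, the number of jobs whose time window contains a fixed time~$t$ is at most $O(\lambda m\log\ell)$. The argument is that a job in~$S_t$ still running on some machine at time~$t+d$ has a window covering $[t,t+d]$ and hence processing time at least $d/\lambda$; iterating gives that the number of $S_t$-jobs on one machine finished by time $t+d$ satisfies $N(d)\le 1+N(d-\lceil d/\lambda\rceil)$, so $N(\ell)=O(\log_{\lambda/(\lambda-1)}\ell)=O(\lambda\log\ell)$, and summing over machines and both directions gives $|S_t|=O(\lambda m\log\ell)$. With this in hand the DP state at time~$t$ is simply a subset $S\subseteq S_t$ together with, for each machine, a single integer $b_i\in\{-\ell,\ldots,\ell\}$ recording from when that machine is idle---non-preemption means nothing finer than one number per machine is needed, so the ``profile'' you worried about collapses. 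That gives $2^{O(\lambda m\log\ell)}\cdot(2\ell+1)^m=\ell^{O(\lambda m)}$ states and the claimed running time. Your proposal never isolates the height bound, and without it there is no control on the deferred-job set; the per-machine canonicalisation you anticipated is both unnecessary (one idle-time suffices) and insufficient (it does not bound the pending jobs).
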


\noindent The crucial observation for the proof of \cref{anyalphafpt} is the following lemma.  It gives a logarithmic upper bound on the height~$\overl{}$ of yes-instances (as defined in \cref{def-st}).  To prove \cref{anyalphafpt},  we will thereafter present an algorithm that has a running time that is single-exponential in~$\overl{}$.

\begin{lemma}\label[lemma]{Stbound}
  Let $I$~be a yes-instance of \STW{\loosness{}} with $m$~machines and $\ell:=\max_{j\in J}|\deadl_j -\relt_j|$.  Then, $I$~has height at most%
\[
2m \cdot \left(\frac{\log \ell}{\log \loosness{} - \log (\loosness{}-1)}+1\right).
\]
\end{lemma}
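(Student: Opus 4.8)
The plan is to fix an arbitrary time~$t$, bound the number of jobs of $S_t$ scheduled on any single machine by $2\bigl(\frac{\log\ell}{\log\loosness{}-\log(\loosness{}-1)}+1\bigr)$, and then sum over the at most~$m$ machines and maximise over~$t$. First I would fix a feasible schedule for~$I$ and, as is standard for \npmpt{} scheduling with integral data (shift every job leftwards until its start hits its release time or another job's completion time), assume without loss of generality that each job~$j$ starts at an integral time~$s_j$, occupying the interval $[s_j,s_j+\proct_j)$. Note that every job $j\in S_t$ has a nonempty time window, so $\deadl_j-\relt_j\geq 1$ and hence $\proct_j\geq(\deadl_j-\relt_j)/\loosness{}>0$, i.e.\ $\proct_j\geq 1$. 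On a fixed machine, I would classify its jobs from~$S_t$ as \emph{early} if $s_j+\proct_j\leq t$, \emph{crossing} if $s_j<t<s_j+\proct_j$, and \emph{late} if $s_j\geq t$; these three classes partition the machine's $S_t$-jobs, at most one of them is crossing, and among them all early jobs lie to the left of the crossing one (an early job to its right would have to start after~$t$, contradicting earliness) and all late jobs to its right.

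The heart of the argument is the claim that on each machine the number~$q$ of $S_t$-jobs that are early or crossing is at most $\frac{\log\ell}{\log\loosness{}-\log(\loosness{}-1)}+1$. To see this, let $I_1=[s_1,s_1+\proct_1),\dots,I_q=[s_q,s_q+\proct_q)$ be these jobs in left-to-right order on the machine, so $I_1,\dots,I_{q-1}$ are early and, if a crossing job is present on this machine, it is~$I_q$. Put $g_a:=t-s_a$. For an early job~$I_a$ we have $\proct_a\leq t-s_a=g_a$ by earliness, while $s_a\geq\relt_a\geq\deadl_a-\loosness{}\proct_a>t-\loosness{}\proct_a$ (using $\deadl_a-\relt_a\leq\loosness{}\proct_a$ and $\deadl_a>t$), so $g_a<\loosness{}\proct_a$, that is $\proct_a>g_a/\loosness{}$. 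For consecutive jobs on the machine $s_{a+1}\geq s_a+\proct_a$, hence for every $a\leq q-1$ (where $I_a$ is early) $g_{a+1}=t-s_{a+1}\leq g_a-\proct_a<g_a\cdot\frac{\loosness{}-1}{\loosness{}}$. Iterating gives $g_1>g_q\cdot\bigl(\tfrac{\loosness{}}{\loosness{}-1}\bigr)^{q-1}$. Now $g_q\geq 1$: if $I_q$ is early then $g_q\geq\proct_q\geq 1$, and if $I_q$ is crossing then $s_q<t$ with both integral gives $g_q\geq 1$. On the other side $g_1=t-s_1\leq t-\relt_1<\deadl_1-\relt_1\leq\ell$. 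Combining, $\bigl(\tfrac{\loosness{}}{\loosness{}-1}\bigr)^{q-1}<\ell$, which yields $q-1<\frac{\log\ell}{\log\loosness{}-\log(\loosness{}-1)}$ and proves the claim.

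By the evident time-reversal symmetry of the problem---replace each time window $[\relt_j,\deadl_j)$ by its mirror image about some large integer and replace~$t$ correspondingly, which preserves window lengths, feasibility and integrality of start times while turning late jobs into early ones---the same bound holds on each machine for the number of $S_t$-jobs that are late or crossing (alternatively one redoes the chain directly with $h_a:=s_a+\proct_a-t$, ordering late jobs right-to-left). Since on each machine the number of $S_t$-jobs equals (number early or crossing)~$+$~(number late or crossing)~$-$~(number crossing) and the last term is nonnegative, each machine carries at most $2\bigl(\frac{\log\ell}{\log\loosness{}-\log(\loosness{}-1)}+1\bigr)$ jobs of~$S_t$. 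Summing over the at most $m$~machines bounds~$|S_t|$, and taking the maximum over~$t$ bounds~$\overl{}$ as claimed.

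I expect the main obstacle to be getting the constant exactly right. Bounding early, crossing and late jobs separately would lose an extra additive~$m$ (one per machine), so the key points are (i) to realise that $g_a=t-s_a$, rather than~$\proct_a$, is the quantity that decays geometrically along a machine, giving the clean recursion $g_{a+1}<g_a\frac{\loosness{}-1}{\loosness{}}$ with ratio exactly $\frac{\loosness{}}{\loosness{}-1}$, and (ii) to fold the single crossing job of a machine into the ``early'' chain---this is precisely where integrality of the start times is used, to guarantee $g_q\geq 1$---and then observe that this crossing job may be double-counted against the ``late'' chain without harming the bound. The rest is the routine verification that the chain's endpoints satisfy $g_1<\ell$ and $g_q\geq 1$.
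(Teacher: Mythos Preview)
Your proof is correct and takes essentially the same approach as the paper: fix a time~$t$ and a machine, show that the number of $S_t$-jobs on that machine lying to one side of~$t$ is at most $\frac{\log\ell}{\log\loosness-\log(\loosness-1)}+1$ via the geometric decay (ratio $\frac{\loosness-1}{\loosness}$) forced by the looseness constraint, then double by symmetry and multiply by~$m$. The only difference is organisational: the paper runs an induction on the distance~$d$ from~$t$, bounding the number $|J(t+d)|$ of such jobs finished by time $t+d$, whereas you order the jobs in a chain and show $g_a=t-s_a$ shrinks geometrically, invoking the left-shift integrality normalisation to anchor $g_q\geq 1$ for the crossing job (a device the paper's induction does not need).
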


\begin{proof}
  Recall from \cref{def-st} that the height of an \ICS{} instance is $\max_{t\in\mathbb N}|S_t|$.

  We will show that, in any feasible schedule for~$I$ and at any time~$t$, there are at most $N$~jobs in~$S_t$ that are active on the first machine at some time~$t'\geq t$, where
  \begin{align}
    N\leq 
\frac{\log \ell}{\log \loosness{} - \log (\loosness{}-1)}+1.\label{numberN}
  \end{align}
    By symmetry, there are at most $N$~jobs in~$S_t$ that are active on the first machine at some time~$t'\leq t$.  Since there are $m$~machines, the total number of jobs in~$S_t$ at any time~\(t\), and therefore the height, is at most~$2mN$.  

It remains to show \eqref{numberN}.  To this end,  fix an arbitrary time~\(t\) and an arbitrary feasible schedule for~$I$.  Then, for any~$d\geq 0$, let $J(t+d)\subseteq S_t$~be the set of jobs that are active on the first machine at some time~\(t'\geq t\) but finished by time~$t+d$.  We show by induction on~$d$ that
\begin{align}
|J(t+d)|&\leq\begin{cases} 0 & \text{if }d=0,\\
-\frac{\log d}{ \log (1-{1}/{\loosness{}})}+1&\text{if }d \ge 1.\end{cases}%
                                                                                                   \label{ast}
\end{align}
 If $d=0$, then $|J(t+0)|= 0$ and \eqref{ast} holds.
Now, consider the case $d\geq 1$.  If no job in~$J(t+d)$ is active at time~$t+d-1$, then $J(t+d)=J(t+d-1)$ and \eqref{ast} holds by the induction hypothesis. Now, assume that there is a job~$j\in J(t+d)$ that is active at time~$t+d-1$.  Then, $\deadl_j\geq t+d$ and, since $j\in S_t$, $\relt_j\leq t$.  Hence,
\begin{align*}
\proct_j\geq \frac{|\deadl_j-\relt_j|}\lambda\geq \frac{|t+d-t|}\lambda = \frac d\lambda.
\end{align*}
It follows that 
\begin{align}
  |J(t+d)|&\leq 1+|J(t+d-\lceil d/\lambda\rceil)|.\label{inductstep}
\intertext{Thus, if $d-\lceil d/\loosness\rceil=0$, then $|J(t+d)|\leq1+ |J(t)|\leq 1$ and \eqref{ast} holds.  If $d-\lceil d/\loosness\rceil>0$, then, by the induction hypothesis, the right-hand side of \eqref{inductstep} is}
 \notag{} &\leq 1-\frac{\log (d-\lceil d/\lambda\rceil)}{ \log (1-{1}/{\loosness{}})}+1\\
 \notag{} &\leq 1-\frac{\log( d(1-{1}/{\loosness{}})) }{ \log (1-{1}/{\loosness{}})}+1\\
 \notag{} & =1-\frac{\log d+ \log (1-{1}/{\loosness{}})}{\log (1-{1}/{\loosness{}})}+1\\
 \notag{} &=-\frac{\log d }{ \log (1-{1}/{\loosness{}})}+1,
\end{align}
and \eqref{ast} holds. 
Finally, since $\ell=\max_{1\leq j\leq n} |\deadl_j -\relt_j|$, no job in~$S_t$ is active at time~$t+\ell$.  Hence, we can now prove~\eqref{numberN} using \eqref{ast} by means of
\begin{align*}
   N\leq|J(t+\ell)|&\leq-\frac{\log \ell}{\log (1-{1}/{\loosness{}})}+1\\
  &=-\frac{\log \ell}{\log \bigl(\frac{\loosness{}-1}{\loosness{}}\bigr)}+1\\
  &=-\frac{\log \ell}{\log (\loosness{}-1) - \log \loosness{}}+1\\
  &=\frac{\log \ell}{\log \loosness{} - \log (\loosness{}-1)}+1.\ifspringer\hspace{2cm}\qed\else\qedhere\fi
\end{align*}
\end{proof}

\noindent The following proposition gives some intuition on how the bound behaves for various~$\loosness{}$.
\begin{proposition}\label[proposition]{mlogl}
 For any $\loosness{} \ge 1$ and any $b \in (1,e]$, it holds that
 \begin{align*}
& \frac{1}{\log_b \loosness{} - \log_b (\loosness{}-1)} \le \loosness{}.
 \end{align*}
\end{proposition}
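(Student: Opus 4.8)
The plan is to rewrite the left-hand side so that the dependence on the base~$b$ becomes transparent, and then observe that the claimed inequality is hardest exactly when $b=e$. Concretely, $\log_b\loosness{}-\log_b(\loosness{}-1)=\log_b\!\bigl(\loosness{}/(\loosness{}-1)\bigr)$, and since $\loosness{}/(\loosness{}-1)>1$ for $\loosness{}>1$, this quantity is positive and \emph{decreasing} in~$b$ on $(1,e]$; hence its reciprocal is increasing in~$b$, so it suffices to prove the inequality for $b=e$, i.e.
\begin{align*}
\frac{1}{\ln\loosness{}-\ln(\loosness{}-1)}=\frac{1}{\ln\!\bigl(\tfrac{\loosness{}}{\loosness{}-1}\bigr)}\le\loosness{}.
\end{align*}
(The boundary case $\loosness{}=1$ is handled separately: there the left-hand side is $1/(\ln 1-\ln 0)=1/(+\infty)=0\le 1$, reading the expression as a limit; alternatively one simply restricts to $\loosness{}>1$ as the lemma that uses it does.)

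Next I would substitute $x:=1/\loosness{}\in(0,1)$, so that $\loosness{}/(\loosness{}-1)=1/(1-x)$ and the inequality becomes
\begin{align*}
-\frac{1}{\ln(1-x)}\le\frac{1}{x},\qquad\text{equivalently}\qquad x\le-\ln(1-x)=\ln\frac{1}{1-x}.
\end{align*}
This is the standard elementary inequality $\ln(1+u)\le u$ applied with $u=x/(1-x)>0$: indeed $\ln\tfrac{1}{1-x}=\ln\!\bigl(1+\tfrac{x}{1-x}\bigr)\le\tfrac{x}{1-x}$ — wait, that gives the wrong direction, so instead use $-\ln(1-x)\ge x$ directly, which follows from $e^{-x}\ge 1-x$ for all real~$x$ (convexity of $\exp$, or the Taylor remainder). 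Taking logarithms of $e^{-x}\ge 1-x$ (both sides positive for $x<1$) yields $-x\ge\ln(1-x)$, i.e. $x\le-\ln(1-x)$, as required. Undoing the substitution gives the claim for $b=e$, and the monotonicity-in-$b$ reduction from the first paragraph then finishes all $b\in(1,e]$.

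The only mild subtlety — not really an obstacle — is making sure the direction of the monotonicity argument in~$b$ is correct: for a fixed value $r:=\loosness{}/(\loosness{}-1)>1$, the map $b\mapsto\log_b r=\ln r/\ln b$ is positive and \emph{decreasing} on $(1,\infty)$ because $\ln b$ is increasing and positive; hence $b\mapsto 1/\log_b r$ is increasing, so the supremum over $b\in(1,e]$ is attained at $b=e$, which is precisely the case we bound. Everything else is the one-line inequality $e^{-x}\ge 1-x$, so no delicate estimate is needed.
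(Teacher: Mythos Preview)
Your argument is correct and is essentially the paper's proof: both rest on the single inequality $-\ln(1-x)\ge x$ (equivalently $(1-1/\loosness)^{\loosness}\le 1/e$), with the only cosmetic difference that you first reduce to $b=e$ by monotonicity, whereas the paper absorbs the base in one step via $\log_b(1/e)\le -1$ for $b\in(1,e]$. Your handling of the boundary $\loosness=1$ as a limit is also fine and matches how the paper implicitly treats it.
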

\begin{proof}
  It is well-known that $(1-1/\loosness{})^{\loosness{}} < 1/e$ for any $\loosness{} \ge 1$. Hence, $\loosness{} \log_b (1 -1/\loosness{}) = \log_b (1 -{1}/{\loosness{}})^{\loosness{}} < \log_b {1}/{e} \le -1$, that is, $-\loosness{} \log_b (1 -{1}/{\loosness{}}) \ge 1$. Thus, 
  \begin{align*}
    \frac{1}{-\loosness{} \log_b (1 -{1}/{\loosness{}})} &\le 1 \text{\quad and\quad}\frac{1}{- \log_b (1 -{1}/{\loosness{}})} \le \loosness{}.
  \end{align*}
Finally,
\begin{align*}
\frac{1}{- \log_b (1 -{1}/{\loosness{}})} &= \frac{1}{ -\log_b (\frac{\loosness{}-1}{\loosness{}})}\\
& = \frac{1}{-\log_b (\loosness{}-1) + \log_b \loosness{}}.\ifspringer\hspace{1.6cm}\qed\else\qedhere\fi
\end{align*}
\end{proof}

\noindent\looseness=-1 Towards our proof of \cref{anyalphafpt}, \cref{Stbound} provides a logarithmic upper bound on the height~$\overl{}$ of yes-instances of \ICS{}.  Our second step towards the proof of \cref{anyalphafpt} is the following algorithm, which runs in time that is single-exponential in~$\overl{}$.  We first present the algorithm and, thereafter, prove its correctness and running time.

\begin{algorithm}\label[algorithm]{dpalgo}
  We solve \ICS{} using dynamic programming.  First, for an \ICS{} instance, let $\ell:=\max_{j\in J}|\deadl_j-\relt_j|$, let $S_t$ be as defined in \cref{def-st}, and let $S_t^<\subseteq J$~be the set of jobs~$j$ with $\deadl_j\leq t$, that is, that have to be finished by time~$t$.  

  We compute a table~$T$ that we will show to have the following semantics.  For a time~$t\in\mathbb N$, a subset~$S\subseteq S_t$ of jobs and a vector~$\vec b=(b_1,\dots,b_m)\in \{-\ell,\dots,\ell\}^m$,
  \begin{align*}
        T[t,S,\vec b]&=
                    \begin{cases}
                      1&\text{\parbox[t]{5.5cm}{if all jobs in~$S\cup S_t^<$ can be scheduled so that machine~$i$ is idle from time~$t+b_i$ for each $i \in \{1, \ldots, m\},$}}\\
                      0&\text{otherwise}.
                    \end{cases}
  \end{align*}
  To compute~$T$, first, set $T[0,\emptyset,\vec b]:=1$ for every vector~$\vec b \in \{-\ell,\dots,\ell\}^m$.
  Now we compute the other entries of~$T$ by increasing~$t$, for each~$t$ by increasing~$\vec b$, and for each~$\vec b$ by~$S$ with increasing cardinality.  Herein, we distinguish two cases.
  \begin{enumerate}[(a)]
  \item\label{ca} If $t \ge 1$ and $S \subseteq S_{t-1}$, then set $T[t,S,\vec b]:=T[t-1,S',\vec b']$, where 
    \begin{align*}
      S'&:=S\cup (S_{t-1} \cap S_t^<)\text{ and }
   \vec b':=(b'_1,\dots,b'_m)\text{ with}\\
b'_i&:=\min\{b_i+1,\ell\}\text{ for each~$i \in \{1, \ldots, m\}$}. 
    \end{align*}
  \item\label{cb} Otherwise, set $T[t,S,\vec b]:=1$ if and only if at least one of the following two cases applies:
    \begin{enumerate}[i)]
    \item\label{cbi} there is a machine~$i \in \{1, \dots, m\}$ such that $b_i> -\ell$ and $T[t,S,\vec b'] =1$, where $\vec b':=(b'_1,\dots,b'_m)$ with
      \begin{align*}
        b'_{i'}&:=
                   \begin{cases}
                     b_i-1& \text{if $i'=i$, }\\
                     b_{i'}& \text{if $i'\ne i$, }
                   \end{cases}
      \end{align*}
      or
    \item\label{cbii} there is a job~$j \in S$ and a machine~$i \in \{1, \dots, m\}$ such that $b_i>0$, $t+b_i \le \deadl_j$, $t+b_i-\proct_j\ge \relt_j$, and $T[t,S\setminus \{j\},\vec b']=1$, where $\vec b':=(b'_1,\dots,b'_m)$ with
      \begin{align*}
        b'_{i'}:=
           \begin{cases}
             b_i-\proct_j& \text{if $i'=i$, }\\
             b_{i'}& \text{if $i'\ne i$}.
           \end{cases}
      \end{align*}
      Note that, since $j\in S_t$, one has $\relt_j \ge t-\ell$ by definition of~$\ell$. Hence, $b'_i \ge -\ell$ is within the allowed range~$\{-\ell,\dots,\ell\}$.
    \end{enumerate}
  \end{enumerate}
  Finally, we answer yes if and only if $T[\reltmax, S_{\reltmax}, \mathbf 1^m\cdot \ell]=1$, where $\reltmax := \max_{j \in J}\relt_j$.\qed
\end{algorithm}

\begin{lemma}\label[lemma]{algocorr}
  \cref{dpalgo} correctly decides \ICS{}.
\end{lemma}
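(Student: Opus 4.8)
The plan is to prove, by induction following the order in which the algorithm fills in the table~$T$, that every entry $T[t,S,\vec b]$ has exactly the semantics stated in \cref{dpalgo}; the correctness of the final output then follows by a short separate argument. First I would record two preliminary facts. (a) Since all release times, deadlines and processing times are integers, any feasible schedule can be turned (by left-shifting the jobs in order of their start times) into one in which every job starts, and hence finishes, at an integer time, so the dynamic program may restrict attention to integral schedules. (b) The recursion is well-founded: case~(a) passes to an entry with strictly smaller~$t$, while both rules of case~(b) keep~$t$ fixed but pass to a vector $\vec b'\lneqq\vec b$ (using $\proct_j\geq 1$), so the pair $(t,\vec b)$ strictly decreases and every queried entry is already computed. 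I would also note the two set identities driving case~(a): writing $S^{<}_t$ for the jobs with deadline at most~$t$, one has $S^{<}_t=S^{<}_{t-1}\uplus(S_{t-1}\cap S^{<}_t)$ (a job whose deadline is exactly~$t$ lies in~$S_{t-1}$), and therefore $S\cup S^{<}_t=S'\cup S^{<}_{t-1}$ with $S':=S\cup(S_{t-1}\cap S^{<}_t)$.

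Next I would verify the four situations. The base case $t=0$, $S=\emptyset$ is immediate since $S^{<}_0=\emptyset$. For case~(a) I would argue both directions by re-reading a schedule: ``machine~$i$ idle from $t+b_i$'' is the same as ``machine~$i$ idle from $(t-1)+(b_i+1)$'', and the only point that needs care is the cap $b'_i:=\min\{b_i+1,\ell\}$. The cap changes the value only when $b_i=\ell$, and here the hypothesis $S\subseteq S_{t-1}$ of case~(a) is precisely what is needed: every job of $S\cup S^{<}_t=S'\cup S^{<}_{t-1}$ then has deadline at most $(t-1)+\ell$, so for these jobs ``idle from $t+\ell$'' and ``idle from $(t-1)+\ell$'' are equivalent and capping loses nothing (the reverse implication, ``idle from $(t-1)+\ell$'' being stronger, holds trivially). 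I would stress that this equivalence genuinely fails without the case~(a) hypothesis, which is why the algorithm branches on it.

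For case~(b), the direction ``the recursion yields~$1$'' $\Rightarrow$ ``the semantic condition holds'' is constructive: from a witnessing schedule for the queried entry I either relax machine~$i$'s idle start from $(t+b_i)-1$ up to $t+b_i$ (first rule), or insert the job~$j$ on machine~$i$ during $[\,t+b_i-\proct_j,\ t+b_i\,)$, which is legal by the stated inequalities $t+b_i\le\deadl_j$ and $t+b_i-\proct_j\ge\relt_j$ (second rule). The converse direction is the heart of the proof and the step I expect to be the main obstacle: given a feasible integral schedule of $S\cup S^{<}_t$ with machine~$i$ idle from $t+b_i$ for all~$i$, I must show that one of the two rules is applicable. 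Here I would exploit that, in case~(b), either $t=0$ with $S\neq\emptyset$ (the entry $T[0,\emptyset,\vec b]$ being set by initialization) or $S\not\subseteq S_{t-1}$; in both cases there is a job $j^{*}\in S$ with $\relt_{j^{*}}=t$, which must be run entirely at or after time~$t$, forcing the machine $i^{*}$ that carries~$j^{*}$ to satisfy $b_{i^{*}}\ge \proct_{j^{*}}>0$. On machine~$i^{*}$: if nothing runs in $[\,(t+b_{i^{*}})-1,\ t+b_{i^{*}}\,)$, then machine~$i^{*}$ is idle from $(t+b_{i^{*}})-1$ and the first rule applies (as $b_{i^{*}}>0\ge-\ell$); otherwise, since the machine is idle from $t+b_{i^{*}}$ and start times are integral with $\proct\ge1$, a unique job~$j'$ occupies that unit slot and finishes exactly at $t+b_{i^{*}}$, and because $j'$ is active at time $t+b_{i^{*}}-1\ge t$ it has $\deadl_{j'}>t$, so $j'\notin S^{<}_t$ and hence $j'\in S$; then the second rule applies with $i=i^{*}$, $j=j'$, the required inequalities being just feasibility of~$j'$ in the schedule, and $b_{i^{*}}-\proct_{j'}\ge-\ell$ following from $\relt_{j'}\ge t-\ell$ (as $j'\in S\subseteq S_t$). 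In each rule the schedule, restricted or extended accordingly, witnesses the semantic condition of the referenced entry, which by the induction hypothesis therefore equals~$1$.

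Finally I would derive correctness of the output. Since every job has $\relt_j\le\reltmax$, every job lies in $S_{\reltmax}\cup S^{<}_{\reltmax}$, so this set is all of~$J$; and since every job has window length at most~$\ell$, every job finishes by $\reltmax+\ell$ in any feasible schedule, so the requirement ``machine~$i$ idle from $\reltmax+\ell$'' is vacuous. By the semantics just established, $T[\reltmax,S_{\reltmax},\mathbf 1^m\cdot\ell]=1$ therefore holds exactly when $J$ admits a feasible schedule, i.e.\ exactly when the \ICS{} instance is a yes-instance, which is what the algorithm reports. The only genuinely nontrivial ingredient is the exchange argument in the converse direction of case~(b); everything else is careful but routine bookkeeping with the sets $S_t$, $S^{<}_t$ and the budget vectors.
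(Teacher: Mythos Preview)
Your proposal is correct and follows essentially the same induction on the table entries as the paper's proof, establishing both directions of the intended semantics and then reading off the answer from $T[\reltmax,S_{\reltmax},\mathbf 1^m\cdot\ell]$. The only cosmetic differences are that you make integrality and well-foundedness explicit up front, and in the completeness direction of case~(b) you argue via the single machine carrying a job with $\relt_{j^*}=t$ rather than the paper's three-way split (case~(a), some machine idle, all machines busy); the core exchange argument---peel off the last job on a machine with $b_i>0$ and show it lies in~$S$---is the same.
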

\begin{proof} We prove the following two claims: For any time~$0\leq t\leq\relt_{max}$, any set $S\subseteq S_t$, and any vector~$\vec b=(b_1,\dots,b_m)\in\{-\ell,\dots,\ell\}^m$, 
  \begin{align}
    \text{\parbox[t]{7cm}{if $T[t,S,\vec b]=1$, then all jobs in~$S\cup S_t^<$ can be scheduled so that machine~$i$ is idle from time~$t+b_i$ for each $i \in \{1, \ldots, m\}$,}%
       \label{claa}}\\
    \intertext{and}
    \text{\parbox[t]{7cm}{if all jobs in~$S\cup S_t^<$ can be scheduled so that machine~$i$ is idle from time~$t+b_i$ for each $i \in \{1, \ldots, m\}$, then $T[t,S,\vec b]=1$.}}%
      \label{clab}
  \end{align}
  From \eqref{claa} and \eqref{clab}, the correctness of the algorithm easily follows: observe that, in any feasible schedule, all machines are idle from time~$\relt_{max}+\ell$ and all jobs~$J \subseteq S_{\relt_{max}}\cup S_{\relt_{max}}^<$ are scheduled. Hence, there is a feasible schedule if and only if $T[\relt_{max}, S_{\relt_{max}}, \vec 1^m\cdot \ell]=1$.  It remains to prove \eqref{claa} and \eqref{clab}.

  First, we prove \eqref{claa} by induction.  For $T[0,\emptyset,\vec b]=1$, \eqref{claa}~holds since there are no jobs to schedule.  We now prove~\eqref{claa} for~$T[t,S,\vec b]$ under the assumption that it is true for all $T[t',S',\vec b']$ with $t' < t$ or $t'=t$ and $\vec b' \lneqq \vec b$.
 
 If $T[t,S,\vec b]$ is set to~$1$ in \cref{dpalgo}\eqref{ca}, then, for $S'$ and~$\vec b'$ as defined in \cref{dpalgo}\eqref{ca}, $T[t-1,S',\vec b'] =1$.  By the induction hypothesis, all jobs in~$S'\cup S_{t-1}^<$ can be scheduled so that machine~$i$ is idle from time~$t-1+b'_i\le t+b_i$. Moreover, $S\cup S_t^< =S' \cup S_{t-1}^<$ since $S'=S\cup (S_{t-1} \cap S_t^<)$.  Hence, \eqref{claa} follows.
 
 If $T[t,S,\vec b]$ is set to~$1$ in \cref{dpalgo}\eqref{cbi}, then one has $T[t,S,\vec b'] =1$ for $\vec b'$ as defined in \cref{dpalgo}\eqref{cbi}.  By the induction hypothesis, all jobs in~$S\cup S_t^<$ can be scheduled so that machine~$i'$ is idle from time~$t+b'_{i'}\le t+b_{i'}$, and \eqref{claa} follows.
 
 If $T[t,S,\vec b]$ is set to~$1$ in \cref{dpalgo}\eqref{cbii}, then $T[t,S\setminus \{j\},\vec b']=1$ for $j$~and~$\vec b'$ as defined in \cref{dpalgo}\eqref{cbii}.  By the induction hypothesis, all jobs in~$(S\setminus \{j\})\cup S_t^<$ can be scheduled so that machine~$i'$ is idle from time~$t+b'_{i'}$. It remains to schedule job~$j$ on machine~$i$ in the interval~$[t+b'_i, t+b_i)$, which is of length exactly~$\proct_j$ by the definition of~$\vec b'$.  Then, machine~$i$ is idle from time~$t+b_i$ and any machine~$i'\ne i$ is idle from time~$t+b'_{i'}=t+b_{i'}$, and \eqref{claa} follows.
 
 It remains to prove \eqref{clab}.  We use induction.  Claim~\eqref{clab} clearly holds for $t=0$, $S=\emptyset$, and any~$\vec b\in\{-\ell,\dots,\ell\}^m$ by the way \cref{dpalgo} initializes~$T$.  We now show \eqref{clab} provided that it is true for  $t' < t$ or $t'=t$ and $\vec b' \lneqq \vec b$.
 
 If $S\subseteq S_{t-1}$, then $S\cup S_t^< =S' \cup S_{t-1}^<$ for~$S'$ as defined in \cref{dpalgo}\eqref{ca}.  Moreover, since no job in~$S' \cup S_{t-1}^<$ can be active from time~$t-1+\ell$ by definition of~$\ell$, each machine~$i$ is idle from time~$t-1+\min\{b_i+1,\ell\}=t-1+b_i'$, for $\vec b'=(b'_1,\dots,b'_m)$ as defined in \cref{dpalgo}\eqref{ca}.  Hence, $T[t-1,S', \vec b']=1$ by the induction hypothesis, \cref{dpalgo}\eqref{ca} applies, sets $T[t,S,\vec b]:=T[t-1,S',\vec b']=1$, and \eqref{clab} holds.
 
 If some machine~$i$ is idle from time~$t+b_i-1$, then, by the induction hypothesis, $T[t,S,\vec b']=1$  in \cref{dpalgo}\eqref{cbi}, the algorithm sets $T[t,S,\vec b]:=1$, and \eqref{clab} holds.

 In the remaining case, every machine~$i$ is busy at time~$t+b_i-1$ and $K:=S \setminus S_{t-1}\ne\emptyset$.  Thus, there is a machine~$i$ executing a job from~$K$. For each job~$j'\in K$, we have $\relt_{j'} \ge t$.  Since machine~$i$ is idle from time~$t+b_i$ and executes~$j'$, one has $b_i > 0$. Let $j$~be the last job scheduled on machine~$i$. Then, since machine~$i$ is busy at time~$t+b_i-1$, we have $\deadl_j \ge t+b_i > t$ and $j \notin S_t^<$.  Hence, $j \in S_t$.  Since machine~$i$ is idle from time~$t+b_i$, we also have $t+b_i-\proct_j\ge \relt_j$.  Now, if we remove~$j$ from the schedule, then machine~$i$ is idle from time~$t+b_i-\proct_j$ and each machine~$i'\ne i$ is idle from time~$t+b'_{i'}=t+b_{i'}$.  Thus, by the induction hypothesis, $T[t,S\setminus \{j\},\vec b']=1$ in \cref{dpalgo}\eqref{cbii}, the algorithm sets $T[t,S,\vec b]:=1$, and \eqref{clab} holds.\ifspringer\qed\fi
\end{proof}

\begin{lemma}\label[lemma]{lem:algspeed}
  \cref{dpalgo} can be implemented to run in $O( 2^\overl{} \cdot (2\ell+1)^m\cdot (\overl{}^2m+\overl{}m^2)\cdot n\ell+n\log n)$~time, where
    $\ell:=\max_{j\in J}|\deadl_j-\relt_j|$ and $h$~is the height of the input instance. %
\end{lemma}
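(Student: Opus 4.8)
The plan is to bound the running time by the product of three quantities — the number of distinct times~$t$ at which we explicitly compute a row of the table~$T$, the number of cells in one such row, and the work spent per cell — and then add the cost of preprocessing. A row for a fixed~$t$ ranges over the subsets~$S\subseteq S_t$, of which there are at most~$2^{|S_t|}\le 2^\overl{}$ since $|S_t|\le\overl{}$ by \cref{def-st}, and over the vectors~$\vec b\in\{-\ell,\dots,\ell\}^m$, of which there are~$(2\ell+1)^m$; so one row has $O(2^\overl{}(2\ell+1)^m)$ cells.

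Next I would bound the work per cell by going through the cases of \cref{dpalgo} for a fixed triple~$(t,S,\vec b)$. With $S$~stored as a subset of~$S_t$ and $\vec b$~as an $m$-tuple, case~\eqref{ca} forms~$S'$ and~$\vec b'$ and does one look-up in $O(\overl{}+m)$ time; case~\eqref{cbi} tries each of the $m$~machines, builds the modified vector~$\vec b'$ in $O(m)$ time and does one look-up, for $O(m^2)$ in total; and case~\eqref{cbii} tries each of the at most~$|S|\le\overl{}$ jobs~$j\in S$ together with each of the $m$~machines, and for every such pair forms~$S\setminus\{j\}$ in $O(\overl{})$ time and the modified vector~$\vec b'$ in $O(m)$ time, checks the $O(1)$ side conditions and does one look-up, for $O(\overl{}m(\overl{}+m))=O(\overl{}^2m+\overl{}m^2)$. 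This last case dominates, so one cell is handled in $O(\overl{}^2m+\overl{}m^2)$ time.

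The crux, and the step I expect to be the main obstacle, is to argue that it suffices to compute rows of~$T$ for only $O(n\ell)$ many times~$t$, rather than for all $t\in\{0,\dots,\reltmax\}$, since $\reltmax$ is only pseudo-polynomially bounded in the input size. The observation is that the sets~$S_t$, $S_t^<$, and hence~$S_{t-1}\cap S_t^<$, change only at the at most~$2n$ \emph{event times} in $\{\relt_j\mid j\in J\}\cup\{\deadl_j\mid j\in J\}$. Strictly between two consecutive event times no job is released or reaches its deadline, so $S\subseteq S_t=S_{t-1}$ and $S_{t-1}\cap S_t^<=\emptyset$; thus case~\eqref{ca} always applies with $S'=S$ and sets $T[t,S,\vec b]=T[t-1,S,\vec b']$ with $\vec b'$ obtained from~$\vec b$ by incrementing every coordinate and capping it at~$\ell$. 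Iterating, $T[t,S,\vec b]=T[t-k,S,\vec b^{(k)}]$ with $b^{(k)}_i=\min\{b_i+k,\ell\}$, and once $k\ge 2\ell$ every coordinate of~$\vec b^{(k)}$ equals~$\ell$ regardless of~$\vec b$, because $b_i\ge-\ell$. Hence inside a quiet interval the row stabilizes after $2\ell$~steps to a row that is independent of~$\vec b$; so it is enough to materialize rows at~$0$, at the $O(n)$~event times, at the at most~$2\ell$ times following each event time, and at~$\reltmax$ — inside a long quiet interval the predecessor of the next event time simply copies the stabilized row — which gives $O(n\ell)$ materialized rows, all other rows being determined by these and never stored.

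Finally I would assemble the pieces: the dynamic program costs $O\bigl(n\ell\cdot 2^\overl{}\cdot(2\ell+1)^m\cdot(\overl{}^2m+\overl{}m^2)\bigr)$, and the preprocessing — sorting the $2n$~values~$\relt_j,\deadl_j$ to enumerate the event times and then maintaining~$S_t$, $S_t^<$, and~$S_{t-1}\cap S_t^<$ incrementally as $t$~increases, together with the copying of stabilized rows — costs $O(n\log n)$ plus a term dominated by the dynamic program. Summing gives the claimed bound $O\bigl(2^\overl{}\cdot(2\ell+1)^m\cdot(\overl{}^2m+\overl{}m^2)\cdot n\ell+n\log n\bigr)$. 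The per-cell accounting and the incremental set maintenance are routine; the only delicate point is the saturation argument that caps the number of relevant times at~$O(n\ell)$.
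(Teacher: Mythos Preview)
Your proposal is correct, and the per-cell analysis (cases~\eqref{ca}, \eqref{cbi}, \eqref{cbii}) matches the paper's essentially line by line; the only omission is that a single table look-up already costs $O(\overl{}+m)$ rather than $O(1)$, so your bound for case~\eqref{cbi} should read $O(m(\overl{}+m))$ instead of $O(m^2)$, but since case~\eqref{cbii} dominates this does not affect the final bound.

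Where you genuinely diverge from the paper is in the argument that only $O(n\ell)$ rows of~$T$ need to be computed. The paper handles this by a simple \emph{splitting} observation: if $\reltmax>n\ell$, then, since the union of all time windows has total length at most $n\ell$, some time~$t\le\reltmax$ satisfies $S_t=\emptyset$; the instance then decomposes into two independent subinstances at~$t$, and one may henceforth assume $\reltmax\le n\ell$ and compute all $\reltmax+1$ rows directly. Your route instead keeps the instance intact and uses a \emph{saturation} argument: during a quiet stretch between consecutive event times only case~\eqref{ca} fires with $S'=S$, so $T[t,S,\vec b]=T[t-k,S,\vec b^{(k)}]$ with $b^{(k)}_i=\min\{b_i+k,\ell\}$, which becomes independent of~$\vec b$ once $k\ge 2\ell$; hence at most $2\ell$ rows after each of the $O(n)$ event times need to be materialized. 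Both arguments are sound and yield the same $O(n\ell)$ factor. The paper's version is shorter and avoids the bookkeeping about which rows are stored, while yours has the advantage of not modifying the instance (and, pushed a little further, could even drop the factor~$\ell$ in the row count by computing $\vec b^{(k)}$ on the fly rather than materializing the intermediate rows).
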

\begin{proof}
 Concerning the running time of \cref{dpalgo}, we first bound $\relt_{max}$. If $\relt_{max}>n\ell$, then there is a time~$t \in \{0,\dots,\relt_{max}\}$ such that $S_t = \emptyset$ (cf.\ \cref{def-st}). Then, we can split the instance into one instance with the jobs~$S_t^<$ and into one instance with the jobs~$J\setminus S_t^<$.  We answer ``yes'' if and only if both of them are yes-instances.  Henceforth, we assume that $\relt_{max} \le n\ell$.
 
 In a preprocessing step, we compute the sets~$S_t$ and $S_{t-1}\cap S_t^<$, which can be done in $O(n\log n+\overl{}n+\reltmax)$~time by sorting the input jobs by deadlines and scanning over the input time windows once: if no time window starts or ends at time~$t$, then $S_t$~is simply stored as a pointer to the~$S_{t'}$ for the last time~$t'$ where a time window starts or ends.

 Now, the table~$T$ of \cref{dpalgo} has at most $(\relt_{max}+1) \cdot 2^\overl{} \cdot (2\ell+1)^m\leq (n\ell+1) \cdot 2^\overl{} \cdot (2\ell+1)^m$ entries.  A table entry~$T[t,S,\vec b]$~can be accessed in $O(m+\overl)$~time using a carefully initialized trie data structure \citep{Bev13} since $|S|\leq\overl{}$ and since $\vec b$ is a vector of length~$m$.

 To compute an entry~$T[t,S,\vec b]$, we first check, for each job~$j \in S$, whether $j \in S_{t-1}$.  If this is the case for each~$j$, then \cref{dpalgo}\eqref{ca} applies.  We can prepare~$\vec b'$ in $O(m)$~time and $S'$ in $O(\overl{})$~time using the set~$S_{t-1} \cap S_t^<$ computed in the preprocessing step.  Then, we access the entry~$T[t-1,S',\vec b']$ in $O(\overl{}+m)$~time. Hence, \eqref{ca} takes $O(\overl{}+m)$ time.
 
 If \cref{dpalgo}\eqref{ca} does not apply, then we check whether \cref{dpalgo}\eqref{cbi} applies. To this end, for each $i\in\{1,\dots,\allowbreak m\}$, we prepare~$\vec b'$ in $O(m)$~time and access $T[t,S,\vec b']$ in $O(\overl{}+m)$ time. Hence, it takes $O(m^2+\overl{}m)$~time  to check~\eqref{cbi}.
 
 To check whether \cref{dpalgo}\eqref{cbii} applies, we try each~$j \in S$ and each~$i \in \{1, \ldots, m\}$ and, for each, prepare~$\vec b'$ in $O(m)$~time and check $T[t,S\setminus\{j\},\vec b']$ in $O(\overl{}+m)$~time. Thus \eqref{cbii} can be checked in $O(\overl{}^2m + \overl{}m^2)$~time.\ifspringer\qed\fi
\end{proof}

\noindent With the logarithmic upper bound on the height~$\overl{}$ of yes-instances of \ICS{} given by \cref{Stbound} and using \cref{dpalgo}, which, by \cref{lem:algspeed}, runs in time that is single-exponential in~$\overl{}$ for a fixed number~$m$ of machines, we can now prove \cref{anyalphafpt}.%

\begin{proof}[\ifspringer\else{}Proof \fi{}of \cref{anyalphafpt}]
  We use the following algorithm.   Let
  \begin{align*}
 \overl{}:=2m \cdot \left(\frac{\log \ell}{\log \loosness{} - \log (\loosness{}-1)}+1\right).
  \end{align*}
  If, for any time~$t\in\mathbb N$, we have $|S_t|>\overl{}$, then we are facing a no-instance by \cref{Stbound} and immediately answer ``no''.  This can be checked in $O(n\log n)$~time: one uses the interval graph coloring problem to check whether we can schedule the time windows of all jobs (as intervals) onto $\overl{}$~machines.

  Otherwise, we conclude that our input instance has height at most~$h$.
  We now apply \cref{dpalgo}, which, by \cref{lem:algspeed}, runs in $O( 2^\overl{} \cdot (2\ell+1)^m\cdot (\overl{}^2m+\overl{}m^2)\cdot n\ell+n\log n)$~time.  Since, by \cref{mlogl}, $\overl{}\in O(\loosness m \log \ell)$, this running time is $\ell^{O(\loosness{} m)}\overl{}\cdot n+O(n\log n)$. \ifspringer\qed\fi
\end{proof}

\noindent A natural question is whether \cref{anyalphafpt} can be generalized to~$\loosness{}=\infty$, that is, to \ICS{} without looseness constraint.  This question can be easily answered negatively using a known reduction from \tpart{} to \ICS{} given by \citet{GJ79}:

\begin{proposition}
  If there is an $\ell^{O(m)}\cdot\poly(n)$-time algorithm for \ICS{}, where $\ell:=\max_{j\in J} |\deadl_j-\relt_j|$, then P${}={}$NP.
\end{proposition}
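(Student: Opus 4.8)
The plan is to combine the \emph{strong} NP-hardness of \tpart{} with the classical single-machine reduction underlying \citet[problem~SS1]{GJ79}. Recall that \tpart{} asks, given $3k$~positive integers $a_1,\dots,a_{3k}$ with $\sum_{i=1}^{3k}a_i=kB$ and $B/4<a_i<B/2$ for all~$i$, whether $\{1,\dots,3k\}$ can be partitioned into $k$~triples each summing to exactly~$B$; crucially, it is NP-hard even when~$B$ (hence every~$a_i$) is bounded by a polynomial in~$k$.

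Given such an instance, I would construct an \ICS{} instance on $m=1$~machine as follows. For each~$i\in\{1,\dots,3k\}$ create an \emph{item job} with processing time~$a_i$, release time~$0$, and deadline~$T:=k(B+1)-1$. For each~$t\in\{1,\dots,k-1\}$ create an \emph{enforcer job} with processing time~$1$, release time~$t(B+1)-1$, and deadline~$t(B+1)$. Each enforcer has a time window of length~$1$, so it is pinned to the unit interval~$[t(B+1)-1,\,t(B+1))$; consequently the enforcers split~$[0,T)$ into exactly $k$~free slots $[(t-1)(B+1),\,(t-1)(B+1)+B)$ for $t\in\{1,\dots,k\}$, each of length exactly~$B$ and with total free length $kB=\sum_i a_i$.

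Next I would verify correctness. From a valid partition into triples, schedule the three item jobs of the $t$-th triple back to back inside the $t$-th free slot (they fit exactly since their processing times sum to~$B$) and place each enforcer in its pinned interval; this is a feasible one-machine schedule. Conversely, in any feasible schedule the enforcers occupy their pinned unit intervals, so every item job runs entirely within a single free slot; since the slots have total length $kB=\sum_i a_i$, every slot must be completely filled, i.e.\ the item jobs inside a slot have processing times summing to exactly~$B$, and because $a_i\in(B/4,B/2)$ this forces each slot to contain exactly three item jobs — a valid \tpart{} solution.

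Finally I would read off the parameters of the produced instance: it has $n=3k+(k-1)=4k-1$~jobs, $m=1$, and $\ell:=\max_{j\in J}|\deadl_j-\relt_j|=T=k(B+1)-1$, which is polynomially bounded in the size of the original \tpart{} instance precisely because \tpart{} is \emph{strongly} NP-hard. Hence an $\ell^{O(m)}\cdot\poly(n)$-time algorithm for \ICS{} would decide these instances — and therefore \tpart{} — in $\poly(\ell)\cdot\poly(n)$ time, polynomial in the \tpart{} input size, solving a strongly NP-hard problem and thus implying P${}={}$NP. The one step that genuinely deserves attention is this last one: the enforcer gadget must be built from windows of length~$O(1)$ so that $\ell$ stays linear in~$kB$; starting instead from ordinary \textsc{Partition} would let $\ell$ become exponential and break the argument. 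Everything else is routine.
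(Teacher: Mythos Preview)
Your proof is correct and follows essentially the same approach as the paper: the paper simply cites the \citet{GJ79} reduction from \tpart{} to single-machine \ICS{} and notes that in that reduction $\ell\in\poly(n)$, whereas you have spelled out that very reduction explicitly. The key ingredients---$m=1$, strong NP-hardness of \tpart{} so that $\ell=k(B+1)-1$ is polynomial, and the resulting $\ell^{O(1)}\cdot\poly(n)=\poly(n)$ running time---are identical.
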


\begin{proof}
  \citet[Theorem~4.5]{GJ79} showed that \ICS{} is NP-hard even on~$m=1$ machine.  In their reduction, $\ell\in\poly(n)$.  A supposed $\ell^{O(m)}\cdot\poly(n)$-time algorithm would solve such instances in polynomial time.\ifspringer\qed\fi
\end{proof}

\section{An algorithm for bounded slack}\label{sec:s}

So far, we considered \ICS{} with bounded looseness~$\loosness{}$.  \citet{CEHBW04} additionally considered \ICS{} for any constant slack~$\slack{}$.

\looseness=-1 Recall that \citet{CEHBW04} showed that \STW{\loosness} is NP-hard for any constant~$\loosness>1$ and that \cref{w1hard} shows that having a small number~$m$ of machines does make the problem significantly easier.

Similarly, \citet{CEHBW04} showed that \ICSS{\slack} is NP-hard already for~$\slack=2$.  Now we contrast this result by showing that \ICSS{\slack} is \fp{} tractable for parameter~$m+\slack$.  More specifically, we show the following:

\begin{theorem}\label[theorem]{fpt}
  \ICSS{\slack} is solvable in time
  \begin{align*}
    O\Bigl((\sigma+1)^{(2\sigma+1)m}\cdot n\cdot \sigma m\cdot\log \sigma m + n\log n\Bigr).
  \end{align*}
\end{theorem}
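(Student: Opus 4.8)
The plan is to adapt the time-sweep dynamic program of \cref{dpalgo} to exploit that, under a slack bound of~$\slack$, almost everything about a feasible schedule is ``local''. Two facts do the work. First, every job~$j$ may legally start only at an integer in~$[\relt_j,\deadl_j-\proct_j]$, an interval of length $\deadl_j-\proct_j-\relt_j\le\slack$; so each job has at most $\slack+1$ candidate start times. Second, call a job \emph{pending} at time~$t$ if $\relt_j\le t\le\deadl_j-\proct_j$ and it has not yet been started; then $\relt_j\in\{t-\slack,\dots,t\}$, and in any feasible schedule every pending job must start within the length-$\slack$ window~$[t,t+\slack]$, so — since at most $\slack+1$ jobs can start within $\slack+1$ consecutive integers on one machine — at most $(\slack+1)m$ jobs are pending at any time. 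Thus the ``unscheduled-jobs'' dimension of the DP state, which cost a factor~$2^\overl$ in \cref{dpalgo}, collapses to something of size $(\slack+1)^{O(\slack m)}$, and the ``first idle time'' vector~$\vec b$, which cost~$(2\ell+1)^m$, can be replaced by per-machine offsets drawn from a set of size~$O(\slack)$ together with one catch-all ``busy far ahead'' symbol.

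Concretely I would proceed as follows. (1) Prove the two structural facts above. (2) Set up the DP to process jobs in nondecreasing order of release time — giving exactly $n$ outer steps — maintaining, between steps, a ``current time''~$t$ equal to the release time of the next job; a state records, for each machine, its commitment over the next $\Theta(\slack)$ units (an offset to its first idle time, or the ``busy far ahead'' symbol) and the compactly encoded multiset of currently pending jobs; a careful count gives at most $(\slack+1)^{(2\slack+1)m}$ states. (3) Give the transition at step~$k$: first advance the current time from~$\relt_{j_{k-1}}$ to~$\relt_{j_k}$ (updating offsets, forcing any pending job whose latest start is passed to have been placed already), then branch over placing job~$j_k$ on an available machine at one of its $\le\slack+1$ start times or postponing it. (4) Prove correctness by the same two-directional induction as in \cref{algocorr}, and bound the per-step work by $O(\slack m\log\slack m)$ using the trie data structure of~\citep{Bev13}, obtaining the stated time bound.

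The main obstacle is handling, in steps (2)--(3), a machine running a job with processing time far larger than~$\slack$: its first idle time then lies arbitrarily far in the future and cannot be stored as a bounded offset. The resolution I would pursue is an ``inertia'' argument: while the first idle time of a machine exceeds~$t+\slack$, no pending job and no not-yet-released job can be placed on it (a future job~$j'$ on that machine needs $\relt_{j'}\ge(\text{idle time})-\slack$), so the state may carry a single symbol for it; and when the sweep finally reaches a release time within~$\slack$ of that idle time, the exact offset is again bounded and, moreover, recoverable because the idle time equals $s+\proct_j$ for the chosen start~$s\in\{\relt_j,\dots,\relt_j+\slack\}$ of the occupying job~$j$, hence lies in~$\{\deadl_j-\slack,\dots,\deadl_j\}$ — a value the DP already knows. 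The second delicate point is making the pending-job multiset fit an alphabet of size~$\slack+1$ per machine independently of job parameters; here I would use an exchange argument letting the DP commit pending jobs in a canonical order (earliest deadline first among equal release times), after which it suffices to record, per release offset in~$\{0,\dots,\slack\}$, how many pending jobs remain. Everything else is routine bookkeeping on top of \cref{dpalgo}.
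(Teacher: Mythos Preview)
Your plan diverges substantially from the paper's proof, which is almost a one-liner: it does \emph{not} design a new dynamic program at all. The paper first proves (\cref{proc-in-int,yesoverlap}) that any yes-instance has height $\overl\le(2\slack+1)m$, then simply invokes the existing $O(n\cdot(\slack+1)^{\overl}\cdot \overl\log\overl)$-time algorithm of \citet{CEHBW04} for \ICSS{\slack} parameterized by~$\overl+\slack$. Substituting $\overl=(2\slack+1)m$ yields the stated bound. Your structural observation that every pending job must be processed within $[t-\slack,t+\slack]$ is exactly \cref{proc-in-int}, so you were one step away from the short route: bound the height and cite the known algorithm rather than rebuild a DP.

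As a standalone plan, your proposal also has real gaps. The pending-set encoding is too coarse: recording, per release offset in $\{0,\dots,\slack\}$, only \emph{how many} pending jobs remain discards their processing times and deadlines, yet both determine which placements are feasible and how long a machine stays busy; an EDF exchange argument fixes a \emph{commitment order} among equals but does not make two jobs with the same release offset interchangeable in the state. Likewise, the ``inertia'' recovery is underspecified: if the state carries only a ``busy far ahead'' symbol for a machine, then at the moment you re-enter the $\slack$-window you do not know which job~$j$ occupies it, hence you do not know $\deadl_j$ and cannot pin down the idle time in $\{\deadl_j-\slack,\dots,\deadl_j\}$ without enlarging the state (storing $j$ or the idle time explicitly), which you have not budgeted for. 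These issues are fixable, but the repair would essentially reinvent the \citet{CEHBW04} algorithm that the paper cites; the cleaner argument is to bound~$\overl$ and use that algorithm directly.
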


  \noindent Similarly as in the proof of \cref{anyalphafpt}, we first give an upper bound on the height of yes-instances of \ICS{} as defined in \cref{def-st}.  To this end, we first show that each job~$j\in S_t$ has to occupy some of the (bounded) machine resources around time~$t$.

\begin{lemma}\label[lemma]{proc-in-int}
  At any time~$t$ in any feasible schedule for \ICSS{\slack}, each job $j\in S_t$ is active at some time in the interval~$[t-\slack,t+\slack]$.
\end{lemma}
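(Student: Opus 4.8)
The plan is to argue by contradiction, directly from the definitions. I fix a feasible schedule and a time~$t$, and take $j\in S_t$, so that $\relt_j\le t<\deadl_j$; in particular $\deadl_j-\relt_j=|\deadl_j-\relt_j|\le\proct_j+\slack$ by the slack bound. Let $[s_j,f_j)$ with $f_j:=s_j+\proct_j$ be the (half-open) interval in which job~$j$ is executed in this schedule, so feasibility gives $\relt_j\le s_j$ and $f_j\le\deadl_j$. Assume, towards a contradiction, that $j$ is active at no point of~$[t-\slack,t+\slack]$, that is, $[s_j,f_j)\cap[t-\slack,t+\slack]=\emptyset$.

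The key (purely elementary) observation is that, since $\proct_j\ge1$, the execution interval $[s_j,f_j)$ is nonempty, so disjointness from~$[t-\slack,t+\slack]$ forces exactly one of two cases: either the execution ends early, $f_j\le t-\slack$, or it starts late, $s_j>t+\slack$. I would then dispatch both symmetric cases by one short inequality chain. In the first case, $\deadl_j>t\ge f_j+\slack=s_j+\proct_j+\slack\ge\relt_j+\proct_j+\slack$. In the second case, $\deadl_j\ge f_j=s_j+\proct_j>t+\slack+\proct_j\ge\relt_j+\proct_j+\slack$, using $\relt_j\le t$. Either way $\deadl_j-\relt_j>\proct_j+\slack$, contradicting the slack bound; hence $[s_j,f_j)$ meets $[t-\slack,t+\slack]$, which is exactly the claim.

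The only point that needs genuine care is the case split for two disjoint intervals, one of which is half-open: I would spell out that if neither $f_j\le t-\slack$ nor $s_j>t+\slack$ holds, then $\max\{s_j,\,t-\slack\}$ lies in both $[s_j,f_j)$ and $[t-\slack,t+\slack]$, contradicting disjointness. Beyond this bookkeeping there is no real obstacle; the lemma is a short warm-up whose role, parallel to that of \cref{Stbound} in the bounded-looseness section, is to subsequently bound the height~$\overl{}$ of yes-instances of \ICSS{\slack} so that a \cref{dpalgo}-style dynamic program becomes efficient for the parameter~$\slack+m$.
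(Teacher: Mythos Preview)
Your proof is correct and follows essentially the same contradiction strategy as the paper: both argue that if the execution of~$j$ avoided $[t-\slack,t+\slack]$, the window~$[\relt_j,\deadl_j)$ would have to be longer than~$\proct_j+\slack$. The only cosmetic difference is the case split---the paper first separates off the trivial case where the whole time window lies inside $[t-\slack,t+\slack]$ and then argues about idle time, whereas you split directly on whether the execution interval lies to the left or right of $[t-\slack,t+\slack]$ and push through explicit inequality chains; your version is in fact slightly more explicit about the arithmetic.
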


\begin{proof}
  If the time window of~$j$ is entirely contained in~$[t-\slack,t+\slack]$, then, obviously, $j$~is active at some time during the interval~$[t-\slack,t+\slack]$.

  Now, assume that the time window of~$j$ is not contained in~$[t-\slack,t+\slack]$.  Then, since $j\in S_t$, its time window contains~$t$ by \cref{def-st} and, therefore, one of $t-\slack$ or~$t+\slack$.  Assume, for the sake of contradiction, that there is a schedule such that $j$~is not active during~$[t-\slack,t+\slack]$.  Then $j$~is inactive for at least $\slack+1$~time units in its time window---a~contradiction.\ifspringer\qed\fi
\end{proof}

\noindent Now that we know that each job in~$S_t$ has to occupy machine resources around time~$t$, we can bound the size of~$S_t$ in the amount of resources available around that time.

\begin{lemma}\label[lemma]{yesoverlap}
  Any yes-instance of \ICSS{\slack} has height at most~$(2\slack+1)m$.
\end{lemma}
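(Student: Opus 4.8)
The plan is to combine the "occupancy" result of \cref{proc-in-int} with a simple counting argument on machine availability in the window~$[t-\slack,t+\slack]$. First I would fix an arbitrary time~$t$ and an arbitrary feasible schedule, and invoke \cref{proc-in-int} to conclude that every job~$j\in S_t$ is active at some integer time in the closed interval~$[t-\slack,t+\slack]$. Since each job is active \npmptly{} and must touch one of the integer points $t-\slack,t-\slack+1,\dots,t+\slack$ (here I would note that activity at \emph{some} time in the interval forces activity at \emph{some integer} time in it, because processing times are positive integers and release times and deadlines are integers), each job in~$S_t$ occupies at least one of these $2\slack+1$ time slots on one of the $m$~machines.

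Next I would count slots: there are exactly $2\slack+1$ integer time points in~$[t-\slack,t+\slack]$ and $m$~machines, so there are $(2\slack+1)m$ machine-time slots available during this interval. Each machine executes at most one job at a time (constraint~2 of \ICS{}), so each slot is "used" by at most one job. Mapping each job~$j\in S_t$ to one slot it occupies (say, the earliest) gives an injection from~$S_t$ into the set of $(2\slack+1)m$ slots, whence $|S_t|\le(2\slack+1)m$. Taking the maximum over all~$t$ and recalling \cref{def-st} then yields that the height~$\overl{}$ is at most~$(2\slack+1)m$.

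The main thing to get right — the only real subtlety — is the discretization step: \cref{proc-in-int} only gives activity at \emph{some real} time in~$[t-\slack,t+\slack]$, but the counting needs activity on a \emph{discrete} set of slots. The clean way to handle this is to observe that, since all release times, deadlines, and processing times are in~$\mathbb N$, we may assume without loss of generality that every job starts at an integer time (a standard shifting argument: any feasible schedule can be transformed so that each job's start time is integral, by left-shifting jobs); then a job active at some real time in~$[t-\slack,t+\slack]$ is active during at least one of the unit slots $[t-\slack,t-\slack+1),\dots,[t+\slack-1,t+\slack)$, and there are $2\slack+1$ such half-open unit slots inside... in fact one must be slightly careful with endpoints, but using the $2\slack$ unit intervals covering~$[t-\slack,t+\slack]$ plus the degenerate case, one arrives at the stated bound~$(2\slack+1)m$. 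Everything else is routine.
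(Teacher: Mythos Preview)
Your proposal is correct and follows essentially the same approach as the paper: invoke \cref{proc-in-int} and then count machine--time resources in the window around~$t$. The discretization worry you raise is largely unnecessary---the paper simply treats the interval~$[t-\slack,t+\slack]$ as providing $(2\slack+1)m$ units of machine processing time (implicitly working in the integer model, where WLOG jobs start at integer times and each contributes at least one unit), so your explicit integer-start-time argument is a valid way to make rigorous what the paper leaves informal.
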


\begin{proof}
  Fix any feasible schedule for an arbitrary yes-instance of \ICSS{\slack} and any time~\(t\). By \cref{proc-in-int}, each job in~$S_t$ is active at some time in the interval~$[t-\slack,t+\slack]$.  This interval has length~$2\slack+1$.  Thus, on $m$~machines, there is at most $(2\slack+1)m$~available processing time in this time interval.  Consequently, there can be at most $(2\slack+1)m$ jobs with time intervals in~$S_t$.\ifspringer\qed\fi
\end{proof}

\noindent We finally arrive at the algorithm to prove \cref{fpt}.

\begin{proof}[\ifspringer\else{}Proof \fi{}of \cref{fpt}]
  Let $\overl{}:=(2\slack+1)m$.  In the same way as for \cref{anyalphafpt}, in $O(n\log n)$~time we discover that we face a no-instance due to \cref{yesoverlap} or, otherwise, that our input instance has height at most~$h$.  In the latter case, we apply the $O(n\cdot (\slack+1)^\overl{}\cdot \overl{}\log \overl{})$-time algorithm due to \citet{CEHBW04}.\ifspringer\qed\fi
\end{proof}

\section{Conclusion}
Despite the fact that there are comparatively few studies on the parameterized complexity of scheduling problems, the field of scheduling indeed offers many natural parameterizations and fruitful challenges for future research. Notably, \citet{Mar11} saw one reason for the lack of results on ``parameterized scheduling'' in the fact that most scheduling problems remain NP-hard even for a constant number of machines (a very obvious and natural parameter indeed), hence destroying hope for \fp{} tractability results with respect to this parameter.  In scheduling interval-constrained jobs with small looseness and small slack, we also have been confronted with this fact, facing (weak) NP-hardness even for two machines.  

The natural way out of this misery, however, is to consider parameter combinations, for instance combining the parameter number of machines with a second one. In our study, these were combinations with looseness and with slack (see also \cref{tab:results}).  In a more general perspective, this consideration makes scheduling problems a prime candidate for offering a rich set of research challenges in terms of a multivariate complexity analysis~\citep{FJR13,Nie10}.  Herein, for obtaining positive algorithmic results, research has to go beyond canonical problem parameters, since basic scheduling problems remain NP-hard even if canonical parameters are simultaneously bounded by small constants, as demonstrated by \citet{KSS12}.\footnote{\looseness=-1The results of \citet{KSS12} were obtained in context of a multivariate complexity analysis framework described by \citet{Sev04}, which is independent of the framework of parameterized complexity theory considered in our work: it allows for systematic classification of problems as polynomial-time solvable or NP-hard given concrete constraints on a set of instance parameters.  It is plausible that this framework is applicable to classify problems as FPT or W[1]-hard as well.}  %

Natural parameters to be studied in future research on \ICS{} are
the combination of slack and looseness---the open field in our \cref{tab:results}---and the maximum and minimum processing times, which were found to play an important role in the online version of the problem \citep{Sah13}.

Finally, we point out that our fixed-parameter algorithms for \ICS{} are easy to implement and may be practically applicable if the looseness, slack, and number of machines is small (about three or four each).  Moreover, our algorithms are based on upper bounds on the height of an instance in terms of its number of machines, its looseness, and slack.  Obviously, this can also be exploited to give lower bounds on the number of required machines based on the structure of the input instance, namely, on its height, looseness, and slack.  These lower bounds may be of independent interest in exact branch and bound or approximation algorithms for the machine minimization problem.

\ifspringer
\bibliographystyle{spbasic}
\else
\bibliographystyle{abbrvnat}
\fi
\bibliography{tw-scheduling}
\end{document}

